\newif\ifproof
\tikzset{state/.style={circle, draw, minimum size=0.5cm, initial distance=0.2cm}}
\algnewcommand{\LineComment}[1]{\Statex \(\triangleright\) #1}
\newcommand{\algmargin}{\the\ALG@thistlm}
\newcommand*{\rom}[1]{\expandafter\@slowromancap\romannumeral #1@}
\newlength{\whilewidth}
\algnewcommand{\parState}[1]{\State  \parbox[t]{\dimexpr\linewidth-\algmargin}{\strut #1\strut}}
\algnewcommand{\parRequire}[1]{\Require  \parbox[t]{\dimexpr\linewidth-\algmargin}{\strut #1\strut}}
\DeclareMathAlphabet{\mathpzc}{OT1}{pzc}{m}{it}
\def\delequal{\mathrel{\ensurestackMath{\stackon[1pt]{=}{\scriptstyle\Delta}}}}
\newtheorem{definition}{Definition}
\newtheorem{theorem}{Theorem}
\newtheorem{lemma}{Lemma}
\crefname{section}{Section}{Sections}
\crefname{subsection}{Section}{Sections}
\crefname{definition}{Definition}{Definitions}
\crefname{proposition}{Proposition}{Propositions}
\crefname{lemma}{Lemma}{Lemmas}
\crefname{theorem}{Theorem}{Theorems}
\crefname{corollary}{Corollary}{Corollaries}
\crefname{example}{Example}{Examples}
\crefname{figure}{Figure}{Figures}
\crefname{assumption}{Assumption}{Assumptions}
\crefname{remark}{Remark}{Remarks}
\crefname{running}{Running Example}{Running Examples}
\crefname{algorithm}{Algorithm}{Algorithms}
\newcommand\old[1]{{\color{gray} #1}}
\renewcommand\old[1]{}
\title{\LARGE \bf
Black-box Stealthy GPS Attacks on Unmanned Aerial Vehicles   
}
\author{Amir Khazraei, Haocheng Meng, and Miroslav Pajic
\thanks{The authors are with the Department of Electrical \& Computer Engineering, Duke University, Durham, NC 27708. Email: {\tt\small \{amir.khazraei, haocheng.meng, miroslav.pajic\}@duke.edu.}}
\thanks{This work is sponsored in part by the ONR under agreement N00014-23-1-2206, AFOSR under the award number FA9550-19-1-0169, and by the NSF CNS-1652544 award and the National AI Institute for Edge~Computing Leveraging Next Generation Wireless Networks, Grant CNS-2112562.}
}
\begin{document}

\maketitle
\thispagestyle{empty}
\pagestyle{empty}

%%%%%%%%%%%%%%%%%%%%%%%%%%%%%%%%%%%%%%%%%%%%%%%%%%%%%%%%%%%%%%%%%%%%%%%%%%%%%%%%
\begin{abstract}
This work focuses on analyzing the vulnerability of unmanned aerial vehicles (UAVs) to stealthy black-box false data injection attacks on GPS measurements. We assume that the quadcopter is equipped with IMU and GPS sensors, and an arbitrary sensor fusion and controller are used to estimate and regulate the system's states, respectively.  We consider the notion of stealthiness in the most general form, where the attack is defined to be stealthy if it cannot be detected by any existing anomaly detector. Then, we show that if the closed-loop control system is incrementally exponentially stable, the attacker can cause arbitrarily large deviation in the position trajectory by compromising only the GPS measurements. We also show that to conduct such stealthy impact-full attack values, the attacker does not need to have access to the model of the system. Finally, we illustrate our results in a UAV case study.
\end{abstract}

\section{Introduction}  
\label{sec:intro}

Unmanned aerial vehicles (UAVs), specifically quadcopters, have become indispensable tools across various applications such as aerial photography, search and rescue, military, agriculture, surveying and inspection, package delivery, and many more. Despite the immense potential, ensuring security and safety of UAVs remains a significant challenge that requires considerable attention. A critical sensor that UAVs heavily rely on is Global Positioning System (GPS) sensor, used for both localization and the navigation tasks. However, GPS sensors are known to be vulnerable to jamming and spoofing attacks~\cite{kerns2014unmanned}. Specifically, GPS spoofing provides control over the victim UAV without the need to  compromise the flight control software; where the attacker only needs to be equipped with a radio-frequency transmitter.

In particular, numerous methods have been proposed to show the impact of GPS attacks on UAVs~\cite{seo2015effect,sathaye2022experimental,he2018friendly,kerns2014unmanned}. There, the goal of the attacker has been to either deviate the drone's trajectory~\cite{seo2015effect}, or take over the control of the drone by navigating it through a desired trajectory~\cite{sathaye2022experimental,he2018friendly,kerns2014unmanned}. Since these methods have not considered stealthiness of the attacks, it has been shown that these attacks could be detected using either physics-based~\cite{liu2019analysis} or machine learning-based~\cite{brewington2023uav} anomaly detectors (AD). Hence, the moment an attack is detected, a recovery method and/or a safe planning/control algorithm could be applied to ensure safety (e.g., land the drone in a safe place). However, in case of a stealthy attack (i.e., that is undetectable by the AD), devising countermeasures becomes even more challenging as the usual initial step towards attack-resilience is attack detection.

% The notion of attack stealthiness has been widely considered in attack resilient control systems literature. Specifically, \cite{mo2010false,jovanov_tac19,kwon2014analysis,zhang2020false,khazraei_automatica21} studies the impact of sensor attacks on linear time invariant (LTI) control systems where the goal is to cause arbitrarily large deviation in the state estimates while remaining stealthy from a fixed  known AD.  However, all these works are limited to systems with LTI model and to generate the sequence of attacks, the attacker needs to have access to the model of the system. Besides, there is no guarantee that the attack will be stealthy if the system uses other types of ADs. There have been some recent works that consider the sensor attacks on control systems, where the attack stealthiness is independent of the deployed AD; i.e., the attack is stealthy if it is stealthy from any existing AD~\cite{bai2017kalman,zhang2022online}. For example, \cite{bai2017kalman} considers sensor attacks on scalar control systems with LTI model. The authors in~\cite{zhang2022online} extends the results to systems with LTI model with general state dimension. However, these works are limited to LTI models and the attacker is assumed to have full access to the system model. Moreover, despite the attack being stealthy from any AD, its duration of remaining undetected is limited.

The notion of attack stealthiness has been widely considered in attack-resilient control systems literature. For example, \cite{mo2010false,jovanov_tac19,kwon2014analysis,zhang2020false,khazraei_automatica21} study the impact of sensor attacks on linear time-invariant (LTI) control systems, aiming to cause large deviations in state estimates while remaining stealthy from a fixed AD. However, these works are limited to LTI systems and require the attacker to have access to the system model. Further, there is no guarantee that the attack will be stealthy against other types of ADs. Some recent works consider sensor attacks where stealthiness is independent of the deployed AD; i.e., the attack is stealthy if it is stealthy from any existing AD~\cite{bai2017kalman,zhang2022online}. For instance, \cite{bai2017kalman} examines sensor attacks on scalar LTI systems, while \cite{zhang2022online} extends the results to LTI systems with general state dimensions. However, they assume the attacker has full access to the system model, and the duration of the stealthy attack is~limited.
% Since UAVs have nonlinear dynamical models, the attacks that rely on systems with LTI model will have limited efficiency. Some recent works have addressed the problem of vulnerability analysis of general nonlinear control systems to stealthy sensor attacks where attack stealthiness is independent of the deployed AD~\cite{khazraei2022attacks,khazraei2023vulnerability,khazraei2023resiliency_cdc}.
% It has been shown that if the attacker has access to the system states and model and the system satisfies some properties, then the system will be vulnerable to stealthy impactful attacks. 

Since UAVs have nonlinear dynamical models, attacks relying on LTI models are less effective. Vulnerability of nonlinear control systems to stealthy sensor attacks, independent of the deployed AD, has been recently addressed~\cite{khazraei2022attacks,khazraei2023vulnerability,khazraei2023resiliency_cdc}. It has been shown that if the attacker has access to the system states and model, and the system meets certain properties, it is vulnerable to stealthy impactful attacks.

This work aligns with prior research by considering sensor attacks on nonlinear control systems~\cite{khazraei2022attacks, hallyburton_cdc22, khazraei2023vulnerability, khazraei2023resiliency_cdc,khazraei2023stealthy}. Yet, there are two main distinctions: the sensor attacks considered here are \emph{black-box}, meaning the attacker does not need access to the system model or states, and we show that a stealthy, impactful attack can be launched solely by compromising the GPS sensor. We propose attack vectors that, by only compromising the GPS sensor, cause deviation in the drone's trajectory while remaining stealthy from \emph{any} existing AD. The contributions of this paper are twofold: our attack is black-box, not requiring access to the system model or states, and we provide conditions under which the attack remains stealthy from any AD. Specifically, we show if the closed-loop control system is incrementally stable, the attack will remain stealthy while deviating the UAV's trajectory.

% The paper is organized as follows. In~\cref{sec:prelim}, we introduce preliminaries, whereas in \cref{sec:motive} the system and attack model are presented, followed by the notion of stealthiness and the attacker's goal in \cref{sec:stealthy}.  
% \cref{sec:perfect} provides sufficient conditions for the existence of impactful yet stealthy attacks.
% Finally, in  \cref{sec:simulation}, we illustrate our results in a simulation environment, and \cref{sec:concl} provides concluding remarks.

\paragraph*{Notation}
We use $\mathbb{R, Z}, \mathbb{Z}_{\geq 0}$ to denote  the sets of reals, integers and non-negative integers, respectively.  For %a matrix $A$, $A^T$ is its transpose and for 
a square matrix $A$, $\lambda_{max}(A)$ denotes the maximum eigenvalue. % of the matrix~$A$. % which is the summation of diagonal entries. 
For a vector $x\in{\mathbb{R}^n}$, $||x||$ denotes the $2$-norm. % of $x$.
For a vector sequence, % of signals,
$x_{0:t}$ denotes the set $\{x_0,x_1,...,x_t\}$. $\mathbf{0}_n$ is a vector of zeros with dimension $n$. We use $\textbf{mod} (a,b)$ to denote the remainder of $a$ divided by $b$, for $a,b\in \mathbb{Z}$. For any sets $A$ and $B$, $A-B$ denotes the set that contains all the elements that are in $A$ but not in $B$.
% A function $f:\mathbb{R}^{n}\to \mathbb{R}^{p}$ is Lipschitz with constant $L$ if for any $x,y\in \mathbb{R}^{n}$ it holds that $||f(x)-f(y)||\leq L ||x-y||$. %If $X$ and $Y$ are two sets, $X-Y$ includes the elements in $X$ that are not in $Y$. 
%
% For a set $X$, $\partial X$ and $X^o$ define the boundary and the interior of the set, respectively. $B_r$ denotes a closed ball with radius $r$; i.e., $B_r=\{x\in \mathbb{R}^n\,\,|\,\,\Vert x\Vert \leq r\}$, whereas $\mathbf{1}_A$ is the indicator function on a set $A$.
% For a function $f$, we denote $f'=\frac{\partial f}{\partial x}$ as the partial derivative of $f$ with respect to $x$ and $\nabla f_i(x)$ is the gradient of the function $f_i$ ($i$-th element of the function $f$). 
Finally, if $\mathbf{P}$ and $\mathbf{Q}$ are probability distributions relative to Lebesgue measure with densities $\mathbf{p}$ and $\mathbf{q}$, respectively, then 
% the total variation between them is defined as $\Vert  \mathbf{P}-\mathbf{Q} \Vert_{tv}=\frac{1}{2}\int \vert \mathbf{p}(x)-\mathbf{q}(x)\vert dx$. The 
the Kullback–Leibler  (KL) divergence between $\mathbf{P}$ and $\mathbf{Q}$ is
$KL(\mathbf{P}||\mathbf{Q})=\int \mathbf{p}(x)\log{\frac{\mathbf{p}(x)}{\mathbf{q}(x)}}dx$.

% \vspace{-8pt}

\section{Preliminaries}\label{sec:prelim}

Let $\mathbb{X}\subseteq \mathbb{R}^n$ and $\mathbb{D}\subseteq \mathbb{R}^m$. Consider a discrete-time nonlinear system with an exogenous input, modeled in the state-space form as
\begin{equation}\label{eq:prilim}
x_{t+1}=f(x_t,d_t,t),\quad x_t\in \mathbb{X},\,\,t\in \mathbb{Z}_{\geq 0},
\end{equation}
where $f:\mathbb{X}\times\mathbb{D}\times  \mathbb{Z}_{\geq 0}\to \mathbb{X}$ is continuous. We denote by $x(t,\xi,d)$ %\todo{here you have time as argument, but not in (1)}  
the trajectory (i.e., the solution) of~\eqref{eq:prilim} at time $t$, when the system has the initial condition $\xi$ and is subject to the input sequence $d_{0:t-1}$.\footnote{To simplify our notation, we denote the sequence $d_{0:t-1}$ as $d$.}  The following definition is derived from~\cite{angeli2002lyapunov,tran2018convergence,tran2016incremental}.
% \vspace{-5pt}
\begin{definition}
The system~\eqref{eq:prilim} is incrementally exponentially stable (IES) in the set $\mathbb{X}\subseteq \mathbb{R}^n$ if exist $\kappa>1$ and
$\lambda>1$~that
% \vspace{-8pt}
\begin{equation}
\Vert x(t,\xi_1,d)-x(t,\xi_2,d)\Vert \leq \kappa \Vert \xi_1-\xi_2\Vert \lambda^{-t},
% \vspace{-4pt}
\end{equation}
holds for all $\xi_1,\xi_2\in \mathbb{X}$, any $d_t\in \mathbb{D}$, and $t\in \mathbb{Z}_{\geq 0}$. When $\mathbb{X}=\mathbb{R}^n$, the system is referred to as globally incrementally exponentially stable (GIES).
\end{definition}
% \vspace{-5pt}
% \begin{definition}
% The system~\eqref{eq:prilim} is incrementally  unstable (IU) in the set $\mathbb{X}\subseteq \mathbb{R}^n$ if for all $\xi_1\in \mathbb{X}$ and any $d_t\in \mathbb{D}$, there exists a $\xi_2$ such that for any $M>0$, % the following
% \begin{equation}
% \Vert x(t,\xi_1,d)-x(t,\xi_2,d)\Vert \geq M,
% \end{equation}
% holds for all $t\geq t'$, for some $t'\in \mathbb{Z}_{\geq 0}$.
% % In case $\mathbb{X}=\mathbb{R}^n$, the system is called globally incrementally unstable (GIU).
% \end{definition}

We now present some properties of KL divergence known as monotonicity and chain-rule~\cite{polyanskiy2022information}.

% \vspace{-5pt}
\begin{lemma}{\cite{polyanskiy2022information}}\label{lemma:mon}
\textbf{(Monotonicity):} Let $P_{X,Y}$ and $Q_{X,Y}$ be two distributions for a pair of variables $X$ and $Y$, and $P_{X}$ and $Q_{X}$ be two distributions for variable $X$. Then,
% \vspace{-6pt}
\begin{equation*}
KL(Q_X||P_X)\leq KL(Q_{X,Y}||P_{X,Y})
\end{equation*}
\end{lemma}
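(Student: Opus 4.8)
The plan is to reduce the claim to the chain rule for KL divergence together with the non-negativity of KL divergence (Gibbs' inequality). Writing $p_{X,Y},q_{X,Y}$ for the joint densities and $p_X,q_X$, $p_{Y|X},q_{Y|X}$ for the corresponding marginals and conditionals, I would first factor each joint density as the product of its marginal and conditional, so that the integrand defining $KL(Q_{X,Y}||P_{X,Y})$ splits additively:
\begin{equation*}
\log\frac{q_{X,Y}(x,y)}{p_{X,Y}(x,y)} = \log\frac{q_X(x)}{p_X(x)} + \log\frac{q_{Y|X}(y|x)}{p_{Y|X}(y|x)}.
\end{equation*}

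Next I would integrate this identity against $q_{X,Y}(x,y)\,dx\,dy$. Integrating the first term over $y$ collapses $q_{X,Y}$ to its marginal $q_X$, recovering exactly $KL(Q_X||P_X)$; integrating the second term yields the conditional divergence $\int q_X(x)\,KL(Q_{Y|X=x}||P_{Y|X=x})\,dx$. This is precisely the chain rule
\begin{equation*}
KL(Q_{X,Y}||P_{X,Y}) = KL(Q_X||P_X) + \int q_X(x)\,KL(Q_{Y|X=x}||P_{Y|X=x})\,dx.
\end{equation*}
To finish, I would argue the conditional term is non-negative: for each fixed $x$, Gibbs' inequality gives $KL(Q_{Y|X=x}||P_{Y|X=x})\geq 0$, and since these are averaged against the non-negative weight $q_X(x)$ the whole integral is non-negative. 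Dropping it from the right-hand side yields the claimed inequality.

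The only place any genuine inequality enters is the non-negativity of KL divergence, so I expect that to be the crux and would establish it as a small auxiliary fact via Jensen's inequality: since $-\log$ is convex, $KL(Q||P)=\mathbb{E}_{Q}\!\left[-\log\frac{p}{q}\right]\geq -\log\,\mathbb{E}_{Q}\!\left[\frac{p}{q}\right]=-\log\int p\,dx=0$. The subtlety to watch is the usual absolute-continuity and support bookkeeping: the factorizations and the integrals are meaningful only where $q_X(x)>0$, so I would restrict the analysis to that region (where the marginals and conditionals are well defined and finite), ensuring no division by zero and that the chain-rule decomposition holds without boundary issues.
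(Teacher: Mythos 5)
Your proposal is correct and is the standard textbook argument; the paper itself gives no proof of this lemma (it is quoted directly from the cited information-theory reference), so there is nothing to diverge from, and your decomposition via the chain rule is exactly the paper's own Lemma~\ref{lemma:chain} combined with non-negativity of the conditional divergence. One small point in your auxiliary Jensen step: $\mathbb{E}_{Q}\left[\tfrac{p}{q}\right]=\int_{\{q>0\}}p\,dx$ need only be $\leq 1$ rather than $=1$ when $P$ places mass outside the support of $Q$, but since $-\log$ is decreasing this only strengthens the bound $KL(Q||P)\geq -\log\int_{\{q>0\}}p\,dx\geq 0$, so your conclusion stands.
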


% \vspace{-12pt}
\begin{lemma}{\cite{polyanskiy2022information}}\label{lemma:chain}
\textbf{(Chain rule):} Let $P_{X,Y}$ and $Q_{X,Y}$ be two distributions for a pair of variables $X$ and $Y$. Then,
% \vspace{-6pt}
\begin{equation*}
KL(Q_{X,Y}||P_{X,Y})= KL(Q_{X}||P_{X})+KL(Q_{Y|X}||P_{Y|X}),
% \vspace{-6pt}
\end{equation*}
where $KL(Q_{Y|X}||P_{Y|X})$ is defined as
% \vspace{-6pt}
\begin{equation*}
KL(Q_{Y|X}||P_{Y|X})=\mathbb{E}_{x\sim Q_X}\{KL(Q_{Y|X=x}||P_{Y|X=x})\}.
% \vspace{-6pt}
\end{equation*}
\end{lemma}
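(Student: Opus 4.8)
The plan is to prove the chain rule directly from the integral definition of KL divergence, exploiting the factorization of joint densities into marginal and conditional factors. First I would denote by $q$ and $p$ the densities of $Q_{X,Y}$ and $P_{X,Y}$ and write the left-hand side explicitly as
\begin{equation*}
KL(Q_{X,Y}||P_{X,Y}) = \int\!\!\int q(x,y)\log\frac{q(x,y)}{p(x,y)}\,dx\,dy.
\end{equation*}

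The key algebraic step is to factor each joint density as $q(x,y)=q(x)\,q(y|x)$ and $p(x,y)=p(x)\,p(y|x)$, so that the logarithm of the ratio splits additively:
\begin{equation*}
\log\frac{q(x,y)}{p(x,y)} = \log\frac{q(x)}{p(x)} + \log\frac{q(y|x)}{p(y|x)}.
\end{equation*}
Substituting this decomposition and distributing the integral yields two terms that I would handle separately.

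For the first term I would integrate out $y$ using $\int q(y|x)\,dy = 1$ for each fixed $x$, collapsing the double integral to $\int q(x)\log\frac{q(x)}{p(x)}\,dx = KL(Q_X||P_X)$. For the second term I would view the integral as iterated, recognizing the inner integral over $y$ as $KL(Q_{Y|X=x}||P_{Y|X=x})$ and the outer integral over $x$ as an expectation against $Q_X$; by the definition supplied in the statement this is precisely $KL(Q_{Y|X}||P_{Y|X})$. Adding the two terms gives the asserted identity.

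There is no genuine obstacle here---the result is a standard identity---so the only care required is bookkeeping: justifying the interchange of integration order by Tonelli's theorem (after splitting the integrand into sign-definite pieces), and invoking the standing assumption from the Notation section that all distributions possess densities with respect to Lebesgue measure, which is what makes the conditional densities $q(y|x)$ and $p(y|x)$ well defined.
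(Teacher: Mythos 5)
Your proof is correct: the factorization $q(x,y)=q(x)\,q(y|x)$, the additive split of the logarithm, and the identification of the two resulting integrals as $KL(Q_X||P_X)$ and $\mathbb{E}_{x\sim Q_X}\{KL(Q_{Y|X=x}||P_{Y|X=x})\}$ is the standard argument, and your remarks about Tonelli (applicable since each split piece is a nonnegative integrand) and the existence of densities cover the only technical points. The paper itself states this lemma as a cited result from \cite{polyanskiy2022information} and provides no proof, so there is nothing to compare against; your argument matches the textbook derivation the citation points to.
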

% \vspace{-8pt}

\begin{lemma}{\cite{polyanskiy2022information}}\label{lemma:Guassian}
Let $P_{X}$ and $Q_{X}$ be two Gaussian distributions with the same covariance $\Sigma$ and different means of $\mu_P$ and $\mu_Q$, respectively. Then, it holds that 
% \vspace{-7pt}
\begin{equation*}
KL(Q_{X}||P_{X})= \mu_Q^T\Sigma^{-1} \mu_P.
\end{equation*}
\end{lemma}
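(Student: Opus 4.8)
The plan is to evaluate the KL divergence directly from its integral definition, exploiting the fact that the two Gaussians share the covariance $\Sigma$. Writing the densities $\mathbf{q}(x)\propto\exp\!\bigl(-\tfrac12(x-\mu_Q)^\top\Sigma^{-1}(x-\mu_Q)\bigr)$ and $\mathbf{p}(x)\propto\exp\!\bigl(-\tfrac12(x-\mu_P)^\top\Sigma^{-1}(x-\mu_P)\bigr)$, I would first form the log-likelihood ratio $\log\bigl(\mathbf{q}(x)/\mathbf{p}(x)\bigr)$. Because both densities carry the same normalizing constant (it depends only on $\Sigma$) and the same quadratic coefficient $-\tfrac12 x^\top\Sigma^{-1}x$, the $\log\det\Sigma$ contributions and the purely quadratic-in-$x$ terms cancel in the ratio, leaving an expression that is affine in $x$. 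This cancellation is exactly where the shared-covariance hypothesis does its work, and it is what makes the final answer a clean form in the two means alone.

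The second step is to integrate this affine expression against $Q_X$. Since $\mathbb{E}_{x\sim Q_X}[x]=\mu_Q$, the expectation simply replaces $x$ by $\mu_Q$, and no second-moment (trace) term survives precisely because the quadratic part already dropped out. I would then collect the remaining mean-dependent contributions, namely the linear coefficient paired with $\mu_Q$ together with the constant terms $\mu_P^\top\Sigma^{-1}\mu_P$ and $\mu_Q^\top\Sigma^{-1}\mu_Q$, and simplify using the symmetry of $\Sigma^{-1}$ to reach the bilinear expression $\mu_Q^\top\Sigma^{-1}\mu_P$ asserted in the statement.

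I expect the main obstacle to be the careful bookkeeping in this collection step: one must track the signs and the factors of $\tfrac12$ attached to each $\mu^\top\Sigma^{-1}\mu$ term and confirm that, under the common-covariance hypothesis, they collapse to the claimed cross term. As an independent check I would diagonalize $\Sigma$ by an orthogonal change of coordinates, reducing the problem to a product of scalar Gaussians with common variances; additivity of the divergence across the resulting independent coordinates then reduces the identity to the one-dimensional case, where it follows from an elementary integral and can be recombined to recover the stated closed form.
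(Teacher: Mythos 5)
Your method---expand the log-likelihood ratio, observe that the normalizing constants and the quadratic-in-$x$ terms cancel because the covariances agree, and then integrate the remaining affine expression against $Q_X$---is the standard and correct way to establish this identity (the paper itself gives no proof, citing \cite{polyanskiy2022information}). However, the final ``collection step'' you defer to is precisely where your argument breaks: carried out honestly, the computation does \emph{not} land on $\mu_Q^T\Sigma^{-1}\mu_P$. Expanding the two quadratic forms gives
\begin{equation*}
\log\frac{\mathbf{q}(x)}{\mathbf{p}(x)}=(\mu_Q-\mu_P)^T\Sigma^{-1}x-\tfrac{1}{2}\mu_Q^T\Sigma^{-1}\mu_Q+\tfrac{1}{2}\mu_P^T\Sigma^{-1}\mu_P,
\end{equation*}
and substituting $\mathbb{E}_{x\sim Q_X}[x]=\mu_Q$ yields
\begin{equation*}
KL(Q_X||P_X)=\tfrac{1}{2}(\mu_Q-\mu_P)^T\Sigma^{-1}(\mu_Q-\mu_P),
\end{equation*}
not the bilinear cross term. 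The statement as printed cannot be right: a KL divergence must vanish when the two distributions coincide and must be nonnegative, whereas $\mu_Q^T\Sigma^{-1}\mu_P$ is strictly positive when $\mu_Q=\mu_P=\mu\neq 0$ and can be negative in general. It is evidently a typo in the lemma; the paper's own subsequent use in~\eqref{eq:KL_imu} and~\eqref{eq:KL_gps_imu} applies the difference-of-means quadratic form, consistent with the correct identity (the missing factor $\tfrac{1}{2}$ there only loosens the upper bound, so the downstream argument survives). Your proposal's error is asserting that the $\tfrac{1}{2}\mu^T\Sigma^{-1}\mu$ terms ``collapse to the claimed cross term'': they collapse to the squared Mahalanobis distance between the means instead. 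Your diagonalization sanity check, had you run it in one dimension, would have caught this immediately, since two unit-variance scalar Gaussians with means $m_Q$ and $m_P$ have divergence $(m_Q-m_P)^2/2$, not $m_Qm_P$.
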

% \vspace{-25pt}
\section{System Model} % and Problem Description}
\label{sec:motive} 

\begin{figure}[!t]
\centering
\includegraphics[width=.7\linewidth, height=3.64cm]{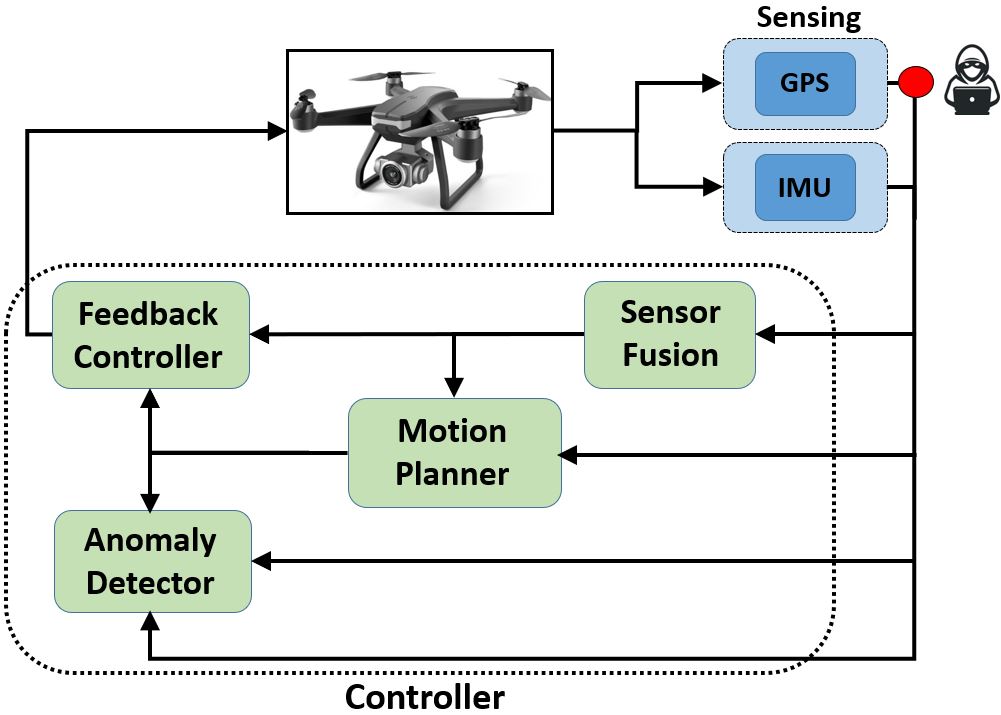}
\caption{Control system architecture in the presence of attacks.}
\label{fig:architecture}
% \vspace{-10pt}
\end{figure}

% \vspace{-6pt}
This section introduces the system and attack model,  allowing us to formally capture the problem addressed in this work. 
% \vspace{-12pt}

\subsection{System and Attack Model} \label{sec:A}
% \vspace{-6pt}
We consider the setup from \cref{fig:architecture} where each component is modeled as follows.

\subsubsection{UAV's Dynamical Model}
The quadcopter's dynamics are described by the following Newton-Euler equations~\cite{bouabdallah2007full}
\begin{equation}\label{eq:UAV_state_sp}
\begin{split}
\ddot{\phi}=&(\frac{I_y-I_z}{I_x})\dot{\theta}\dot{\psi}+\frac{\tau_x}{I_x}, \,\,\, \ddot{\theta}=(\frac{I_z-I_x}{I_y})\dot{\phi}\dot{\psi}+\frac{\tau_y}{I_y}\\
\ddot{\psi}=&(\frac{I_x-I_y}{I_z})\dot{\theta}\dot{\phi}+\frac{\tau_z}{I_z},  \,\,\, \ddot{z}= g-\frac{F}{m}\bigg(\cos(\phi) \cos(\theta) \bigg)\\
\ddot{x}=& \frac{F}{m}\bigg(\cos(\phi) \sin(\theta) \cos(\psi) + \sin(\phi) \sin(\psi)\bigg)\\
\ddot{y}=& \frac{F}{m}\bigg(\cos(\phi) \sin(\theta) \sin(\psi) - \sin(\phi) \cos(\psi)\bigg);
\end{split}
\end{equation}
here, $\mathbf{p}=\begin{bmatrix}x,y,z\end{bmatrix}^T$ and $\mathbf{v}=\begin{bmatrix}\dot{x},\dot{y},\dot{z}\end{bmatrix}^T$ represent the position and velocity in the global frame $\mathcal{E}$, while $\dot{\Omega} = \begin{bmatrix}\dot{\phi},\dot{\theta},\dot{\psi}
\end{bmatrix}^T$ and $\dot{\Omega} = \begin{bmatrix}\dot{\phi}, \dot{\theta}, \dot{\psi}\end{bmatrix}^T$ denote roll, pitch and yaw angles and their rates. The vehicle mass is $m$, $g$ is the gravitational acce- leration, $F$ is the total thrust, $\bm{\mathcal{I}} = \text{diag}(I_x, I_y, I_z)\in \mathbb{R}^{3\times 3}$ is the inertia matrix, and $\bm{\tau}=\begin{bmatrix}\tau_x,\tau_y,\tau_z\end{bmatrix}^T$ is the torque~vector.

Further, the control inputs are the squared angular velocities of the rotors, $\mathbf{u}=\begin{bmatrix}w_1^2,w_2^2,w_3^2,w_4^2\end{bmatrix}^T$. The thrust and torques are calculated as
%
% \vspace{-4pt}
\begin{equation}
\begin{bmatrix}F\\\tau_x\\\tau_y\\\tau_z\end{bmatrix}=\begin{bmatrix}b & b & b & b\\0 & -bl & 0 &bl\\bl & 0 & -bl &0\\-d & d & -d & d \end{bmatrix}\begin{bmatrix}w_1^2\\w_2^2\\w_3^2\\w_4^2\end{bmatrix},
% \vspace{-2pt}
\end{equation}
where $l$ is the distance from the motor to the center of gravity, and $b$ and $d$ are thrust and drag coefficients, respectively. The state vector is $\mathbf{x}=\begin{bmatrix} \phi ,\, \theta ,\, \psi ,\, \dot{\phi} ,\, \dot{\theta} ,\, \dot{\psi} ,\, x ,\, y ,\, z ,\, \dot{x} ,\, \dot{y},\, \dot{z} \end{bmatrix}^T$, leading to the state-space representation
dynamics
% \vspace{-4pt}
\begin{equation}\label{eq:dynamic_cont}
\dot{\mathbf{x}}=\mathbf{f}(\mathbf{x}_t,\mathbf{u}_{t_k}), \qquad t\in [t_k
,t_k+T_d) ;
% \vspace{-3pt}
\end{equation}
here, $t_k = kT_d$ is the sampling time ($T_d$ is the controller's sampling period), and $\mathbf{f}$ the %state 
transition function from~\eqref{eq:UAV_state_sp}. 
For %the purpose of 
our analysis, we discretize the system using the Euler~method
%
% \vspace{-4pt}
\begin{equation}\label{eq:discrete_1}
{\mathbf{x}}_{t_k+T_d}={\mathbf{x}}_{t_k} + \mathbf{f}(\mathbf{x}_{t_k},\mathbf{u}_{t_k})T_d.
% \vspace{-3pt}
\end{equation}
% For simplicity of the 
To simplify notation we drop $T_d$, and including model uncertainty, %we express the above equation as
\eqref{eq:discrete_1} becomes
% \vspace{-4pt}
\begin{equation}\label{eq:discrete_2}
{\mathbf{x}}_{k+1}={\mathbf{x}}_{k} + \mathbf{f}(\mathbf{x}_{k},\mathbf{u}_{k})T_d+\mathbf{w}_k, 
% \vspace{-3pt}
\end{equation}
where $\mathbf{w}$ is a Gaussian white noise with zero mean and covariance $\Sigma_{\mathbf{w}}$. We assume that the system is equipped with IMU and GPS sensors that can imperfectly measure a subset of states with sampling intervals of $T_{imu}$ and $T_{gps}$, respectively. The GPS sampling frequency is typically $5 \text{Hz}$ ($T_{gps} = 200 \text{ms}$), while the IMU operates at $1 \text{kHz}$ ($T_{imu} = 1 \text{ms}$). Since the controller and sensor fusion modules run at the same frequency as the IMU, we consider $T_d = T_{imu} = 1 \text{ms}$.  
Defining $N_{g}=\frac{T_{gps}}{T_{imu}}$, the GPS updates at $k\in \mathbb{Z}^g$, where $\mathbb{Z}^g = \{k\in \mathbb{Z}|\textbf{mod}(k,N_{g})=0\}$. 

The GPS directly measures the drone's position
% \vspace{-5pt}
\begin{equation}
 \mathbf{y}_k^{\mathbf{p}}= \mathbf{p}_k+\mathbf{n}_k^{\mathbf{p}},
 % \vspace{-5pt}
\end{equation}
$k\in \mathbb{Z}^g$, with $\mathbf{n}_k^{\mathbf{n}}$ zero mean Gaussian noise with covariance $\Sigma_{\mathbf{p}}$ taken into account. By augmenting Euler angles and their rates as $\mathbf{\Omega}=\begin{bmatrix} \Omega^T & \dot{\Omega}^T\end{bmatrix}^T$,  we model the IMU sensor as
% \vspace{-5pt}
\begin{equation}
 \mathbf{y}_k^{\mathbf{\Omega}}= h(\mathbf{\Omega}_k)+\mathbf{n}_k^{\mathbf{\Omega}} 
 % \vspace{-5pt}
\end{equation}
for all $k\in \mathbb{Z}$. Moreover, $h$ is a Lipschitz nonlinear function\footnote{For more information about the function $h$ please refer to~\cite{mahony2012multirotor}.} with constant $L_h$ and  $\mathbf{n}_k^{\mathbf{\Omega}}$ is a Gaussian noise with mean ${\mu_{\mathbf{\Omega}}}_k$,\footnote{Note that the mean can be nonzero to capture the IMU bias.} and covariance $\Sigma_{\mathbf{\Omega}}$.  We also define $\mathbf{y}_k=\begin{bmatrix}
  \mathbf{y}_k^{\mathbf{p}} \\ \mathbf{y}_k^{\mathbf{\Omega}}   
\end{bmatrix}$ for $k\in \mathbb{Z}^g$ and $\mathbf{y}_k=\mathbf{y}_k^{\mathbf{\Omega}}$ for $k\in \mathbb{Z}-\mathbb{Z}^g$, capturing all sensor measurements available at time $k$.

% \begin{figure}[!t]
% \centering
% \vspace{6pt}
% \includegraphics[width=0.368\textwidth]{./Pictures/Picture}
% \caption{Control system architecture considered in this work, in the presence of network-based attacks.}
% \label{fig:architecture}
% \end{figure}

\subsubsection{Control Unit}
The controller, shown in~\cref{fig:architecture}, includes a sensor fusion module (state estimator), a feedback controller, and an anomaly detector (AD). Below, we provide details on the sensor fusion and feedback controller. The AD will be discussed after introducing the attack model.

\paragraph*{Sensor Fusion}
Accurate state estimation is essential for effective control of the UAV's attitude and position.~As~IMU measurements do not directly measure $\mathbf{\Omega}$ and both~the~IMU and GPS are subject to noise, %a sensor fusion model 
sensor fusion is necessary to estimate the states. Considering that the IMU has a higher sampling rate than the GPS ($T_{imu}<T_{gps}$), we update the state estimates based on the IMs sampling rate. In a general form, state estimates evolve using the  dynamics
% \vspace{-8pt}
\begin{equation}\label{eq:fusion}
\hat{\mathbf{x}}_{k} = \mathbf{g}(\hat{\mathbf{x}}_{k-1},\mathbf{u}_{k-1},\mathbf{y}_k^{c},k),
% \vspace{-5pt}
\end{equation}
where $k$ in the function $\mathbf{g}$ captures the time dependency of the sensor fusion, as $\mathbf{y}_k^{c}$ contains GPS sensor output only for $k\in \mathbb{Z}^g$. 
$\hat{\mathbf{x}} \in \mathbb{R}^{n}$ is a vector containing the state estimations, and $\mathbf{g}$ is a Lipschitz function with constant $L_{\mathbf{g}}$ (in %some 
sensor fusion models, it is commonly $n > 12$~\cite{angeli2002lyapunov}). Also, $\mathbf{y}_k^{c}$ captures the sensor measurements received by the controller. Thus, without malicious activity, it holds that $\mathbf{y}_k^{c}=\mathbf{y}_k$.

\paragraph*{Controller}
Since UAVs are intrinsically unstable, stabilizing them with an appropriate controller is critical. %In general, a 
A feedback controller can be represented in the state-space~form~as
% \vspace{-4pt}
\begin{equation}
\label{eq:plant_withoutB}
\mathpzc{X}_{k}=f_c(\mathpzc{X}_{k-1},\hat{\mathbf{x}}_{k},r_k), \qquad
\mathbf{u}_k  = K(\mathpzc{X}_{k},\hat{\mathbf{x}}_{k});
% \vspace{-3pt}
\end{equation}
here, $\mathpzc{X}_k$ is the internal state of the controller, and $r_k$ represents the set point values provided by the planning module. The function $f_c$ evolves the controller's internal state and is assumed to be Lipschitz with constant $L_{f_c}$. The function $K$ maps $\mathpzc{X}$ and $\hat{\mathbf{x}}$ to the control input.
The full state of the closed-loop control system is defined as $\mathbf{X}\delequal \begin{bmatrix}
\mathbf{x}^T&\hat{\mathbf{x}}^T&{\mathpzc{X}}^T \end{bmatrix}^T$, and exogenous disturbances and set points are augmented as  $\mathbf{d}\delequal \begin{bmatrix}
\mathbf{w}^T&\mathbf{n}^T&r^T \end{bmatrix}^T$. The dynamics of the closed-loop system can then be represented~as
\begin{equation}\label{eq:closed-loop}
\mathbf{X}_{k+1}=F(\mathbf{X}_k,\mathbf{d}_k,k).
\end{equation}

\subsubsection{Attack Model} \label{sec:attack_model}
We consider a sensor attack model targeting GPS measurements, where the information received by the controller differs from the actual measurements. Such attacks can be non-invasive (e.g., GPS spoofing~\cite{kerns2014unmanned}), or involve compromising the information flow between the GPS sensor and the controller (e.g., network-based attacks~\cite{lesi_rtss17}). In either case, the attacker can `inject' a desired false~data into the current GPS sensor measurements.
Thus, assuming the attack begins at time $k=0$, the compromised GPS measurements delivered to the controller %for $k\in \mathbb{Z}_{\geq 0}^g$ 
can be modeled~as~\cite{ncs_attack_model}
% \vspace{-5pt}
\begin{equation}\label{att:model}
\mathbf{y}_k^{\mathbf{p},c,a} = \mathbf{y}_k^{\mathbf{p},a}+a_k,~~k\in \mathbb{Z}_{\geq 0}^g;
% \vspace{-4pt}
\end{equation}
here, $a_k\in {\mathbb{R}^3}$ represents the attack signal injected at time $k$, and $\mathbf{y}_k^{\mathbf{p},a}$ is the true sensor information (i.e., without the attack) %before the attack is injected 
at time $t$. Since the sensor fusion and controller use the compromised sensor data to compute state estimates $\hat{\mathbf{x}}_t$ and input $\mathbf{u}_t$, the attack impacts the system's evolution. We use the superscript $a$ to denote any signal from a compromised system, e.g., $\mathbf{y}_k^{\mathbf{p},a}$ for pre-attack measurements, and $\mathbf{X}^a\delequal \begin{bmatrix}
{\mathbf{x}^a}^T & \left(\hat{\mathbf{x}}^a\right)^T&{\mathpzc{X}^a}^T \end{bmatrix}^T$ for the closed-loop plant state under attack.

In this study, we assume that the attacker does \emph{not} have access to the system model, sensor fusion, or controller dynamics -- i.e., we consider \emph{black-box attacks}. However, the attacker has computation power to compute suitable attack signals. The objective is to design an attack signal $a_k$, $k \in \mathbb{Z}_{\geq 0}$, that remains \emph{stealthy} -- i.e., undetected by the AD -- while \emph{causing deviations} in the UAV trajectory. The formal definition of the attacker's goal will be provided after introducing the AD and the notion of stealthiness.

\subsubsection{Anomaly Detector} 
To detect system attacks and anomalies, an AD analyzes received IMU and GPS sensor measurements. The AD accesses a sequence of values $\mathbf{y}_{-\infty:k}^c$ and performs binary hypothesis checking

$H_0$:  normal condition (the AD receives $\mathbf{y}_{-\infty:k}^c$);~~
% \\

% \vspace{-6pt}
$H_1$: abnormal behaviour (receives %$Y_{-\infty}^a:Y_t^a$
$\mathbf{y}_{-\infty:-1}^c,\mathbf{y}_{0:k}^{c,a}$).\footnote{Since the attack starts at $t=0$, we do not use superscript $a$ for the system evolution for $t<0$, as the trajectories of the non-compromised and compromised systems do not differ before the attack~starts.}

Given a sequence of received measurements $\bar{\mathbf{y}}^k=\bar{\mathbf{y}}_{-\infty:k}$, it is either from the $H_0$ hypothesis with the joint distribution $\mathbf{P}(\mathbf{y}_{-\infty:k}^c)$, or from the alternate hypothesis with a joint distribution $\mathbf{Q}(\mathbf{y}_{-\infty:-1}^c,\mathbf{y}_{0:k}^{c,a})$  -- note that this distribution is not known.
% The distribution $\mathbf{Q}$ is controlled by the injected attack signal for known noise profiles. 
We define the anomaly detector $D$ as the mapping
% \vspace{-4pt}
\begin{equation}
D: \bar{\mathbf{y}}^k \to \{0,1\},
% \vspace{-4pt}
\end{equation}
where the output $0$ corresponds to $H_0$ and output $1$ to $H_1$. For any anomaly detector $\mathcal{D}$ we define probability of true detection as $p^{TD}=\mathbb{P}(D(\bar{\mathbf{y}}^k)=1|\bar{\mathbf{y}}^k \sim \mathbf{Q})$ and probability of false alarm as $p^{FA}=\mathbb{P}(D(\bar{\mathbf{y}}^k)=1|\bar{\mathbf{y}}^k \sim \mathbf{P})$.

% It should be noted that AD in above is n the most general format as there is no assumption on the mapping $D$, and the output measurements are the only information that is accessible to the system to detect the abnormalities. Any other signal in the system can be obtained using the sequence of sensor measurements ${\mathbf{y}}^t$ where this construction can be captured in mapping $D$.

% For any detector $D$, two possible errors may occur. The error type ($\rom{1}$) known as \emph{false alarm}, occurs if $D(\bar{Y}^t)=1$ when $\bar{Y}^t \sim \mathbf{P}$ and error type ($\rom{2}$), also known as \emph{miss-detection}, occurs when $D(\bar{Y}^t)=0$ for $\bar{Y}^t \sim \mathbf{Q}$. Hence, we define 
% the \underline{\emph{sum of conditional error probabilities}} of the intrusion detector for a given random sequence ${\mathbf{y}}^t$, at time $t$~as
% %
% \begin{equation}
% \label{eq:pe}
% p_t^e=\mathbb{P}(D(\bar{Y}^t)=0|\bar{Y}^t\sim \mathbf{Q})+\mathbb{P}(D(\bar{Y}^t)=1|\bar{Y}^t \sim \mathbf{P}).
% \end{equation}

% Note that $p_t^e$ is not a probability measure as it can take values larger than one. However, it will be useful when we define the notion of stealthy attacks in the following~section.

\section{Formalizing Stealthiness and Attack Goals}
\label{sec:stealthy}
In this section, we capture the conditions for which an attack sequence is stealthy from \emph{{any}} %existing 
AD.  Specifically, an attack is defined to be strictly stealthy if there exists no detector that can detect the  attack better than a random guess detector; here, by better we assume that the probability of true detection is greater than the probability of false alarm. However, achieving such stealthiness may not always be feasible. Therefore, we introduce $\epsilon$-\emph{stealthiness}, which, as demonstrated later, can be achieved for GPS attacks. For a detailed discussion on attack stealthiness see~\cite{khazraei2022attacks}. 

\begin{definition} \label{def:stealthiness}
Consider the system from~\eqref{eq:closed-loop}. An attack sequence, 
%denoted by 
$\{a_{0}, a_{1},...\}$, 
is \emph{\textbf{strictly stealthy}} if there exists no detector for which $p_t^{TD}-p_t^{FA}>0$ holds, for any $t\geq 0$.  
An attack is
\textbf{$\epsilon$-\emph{stealthy}} if for a given $\epsilon >0$, there exists no detector such that $p_t^{TD}-p_t^{FA}>\epsilon$ holds, for any $t\geq 0$. 
\end{definition}
% \vspace{-6pt}

The following Lemma uses Neyman-Pearson theorem to capture the condition for which the received sensor measurements %by the intrusion detector
satisfy the stealthiness condition from Definition~\ref{def:stealthiness}. 

% To capture the stealthiness condition in Definition~\ref{def:stealthiness} on the considered ID, the following theorem uses Neyman-Pearson lemma.
% \vspace{-4pt}
\begin{lemma}[\cite{khazraei2022attacks,khazraei_l4dc22}]
% Consider the system from~\eqref{eq:plant}. 
An attack sequence %of attack
is $\epsilon$-stealthy if the corresponding observation sequence $\mathbf{y}_{0}^{c,a}:\mathbf{y}_t^{c,a}$ satisfies
% \vspace{-7pt}
\begin{equation}\label{ineq:stealthiness}
KL\big(\mathbf{Q}(\mathbf{y}_{-\infty:-1}^c,\mathbf{y}_{0:k}^{c,a})||\mathbf{P}(\mathbf{y}_{-\infty:k}^c)\big)\leq \log(\frac{1}{1-\epsilon^2}).
% \vspace{-3pt}
\end{equation}
\end{lemma}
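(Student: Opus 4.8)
The plan is to characterize the largest detection advantage $p_t^{TD}-p_t^{FA}$ achievable by \emph{any} detector in terms of a distance between $\mathbf{Q}$ and $\mathbf{P}$, and then show that the KL hypothesis forces this distance below $\epsilon$. Fix a horizon $t$ and observe that every detector $D$ is determined by its acceptance region $\{D=1\}$, so that
\[
p_t^{TD}-p_t^{FA}=\mathbb{P}_{\mathbf{Q}}(D=1)-\mathbb{P}_{\mathbf{P}}(D=1)=\int_{\{D=1\}}\big(\mathbf{q}-\mathbf{p}\big)\,dx .
\]
This integral is maximized by taking $\{D=1\}=\{x:\mathbf{q}(x)>\mathbf{p}(x)\}$, which is exactly the Neyman-Pearson likelihood-ratio test; hence
\[
\sup_{D}\big(p_t^{TD}-p_t^{FA}\big)=\int_{\mathbf{q}>\mathbf{p}}\big(\mathbf{q}-\mathbf{p}\big)\,dx=TV(\mathbf{Q},\mathbf{P}),
\]
the total variation distance. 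Therefore the attack is $\epsilon$-stealthy if and only if $TV(\mathbf{Q},\mathbf{P})\le\epsilon$, and it suffices to show that the KL bound implies this.

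Next I would invoke the Bretagnolle-Huber inequality, which relates total variation to KL divergence through
\[
TV(\mathbf{Q},\mathbf{P})\le\sqrt{1-e^{-KL(\mathbf{Q}||\mathbf{P})}} .
\]
The particular form $\log\frac{1}{1-\epsilon^2}$ of the hypothesis is tailored to this bound: assuming $KL(\mathbf{Q}||\mathbf{P})\le\log\frac{1}{1-\epsilon^2}$ gives $e^{-KL(\mathbf{Q}||\mathbf{P})}\ge 1-\epsilon^2$, so $1-e^{-KL(\mathbf{Q}||\mathbf{P})}\le\epsilon^2$ and hence $TV(\mathbf{Q},\mathbf{P})\le\epsilon$. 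Combined with the first step, no detector can achieve advantage exceeding $\epsilon$ at any $t$, which is precisely $\epsilon$-stealthiness in the sense of \cref{def:stealthiness}.

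The step I expect to be the main obstacle is the first one rather than the final algebra: one must argue rigorously that the supremum of the detection advantage over all (including randomized) decision rules is attained by the likelihood-ratio test and equals $TV(\mathbf{Q},\mathbf{P})$, which is where the Neyman-Pearson theorem is essential. A secondary subtlety is that the hypotheses are defined over the infinite past $\mathbf{y}_{-\infty:k}$; I would handle this by working with finite truncations and passing to the limit, using that the common pre-attack factor $\mathbf{y}_{-\infty:-1}^c$ is shared by $\mathbf{Q}$ and $\mathbf{P}$ and so contributes nothing to the KL divergence (by the chain rule, \cref{lemma:chain}). Once the total-variation characterization is established, choosing Bretagnolle-Huber rather than the looser Pinsker inequality is exactly what yields the clean threshold $\log\frac{1}{1-\epsilon^2}$.
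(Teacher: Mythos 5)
Your proposal is correct and follows essentially the same route as the cited proof: the paper itself states the lemma without proof (deferring to \cite{khazraei2022attacks,khazraei_l4dc22}) but explicitly flags that it ``uses the Neyman--Pearson theorem,'' and the argument in those references is exactly your chain of (i) identifying $\sup_D(p_t^{TD}-p_t^{FA})$ with the total variation distance via the likelihood-ratio test and (ii) bounding total variation by KL through the Bretagnolle--Huber inequality, whose form is precisely what produces the threshold $\log\frac{1}{1-\epsilon^2}$. No gaps to report.
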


% \begin{remark}
% The $\epsilon$-stealthiness condition defined in~\cite{bai2017data,bai2017kalman} requires $$\lim_{t\to \infty}\frac{KL\big(\mathbf{Q}(Y_{0}^a:Y_t^a)||\mathbf{P}(Y_{0}:Y_t)\big)}{t}\leq \epsilon.$$ 
% This allows for the KL divergence to linearly increase over time for any $\epsilon>0$, and as a result, after large-enough time period the attack may be detected. On the other hand, our definition of $\epsilon$-stealthy %is independent of the time $t$ and only depends on $\epsilon$
% only depends on $\epsilon$ and is fixed for any time $t$; thus, it introduces a stronger notion of stealthiness for the attack. 
% \end{remark}
  
\subsection{Formalizing Attack Goal}\label{sec:attack_goal}

As %previously 
discussed, the attacker aims to degrade control performance by causing the position trajectory to deviate by at least $\alpha$ from the desired trajectory.  Specifically, there exists a time $k'$ after the attack starts ($k=0$) such that $\Vert \mathbf{p}_{k'}^a-\mathbf{p}_{k'}^{}\Vert \geq \alpha $ for some~$\alpha>0$. Additionally, the attacker seeks to remain stealthy (i.e., undetected by the AD), as formalized below.
% \vspace{-4pt}
\begin{definition}
\label{def:eps_alpha}
An attack sequence $\{a_{0}, a_{1},...\}$  
is referred to as $(\epsilon,\alpha)$-successful if there exists $k'\in \mathbb{Z}_{\geq 0}$ such that $ \Vert \mathbf{p}_{k'}^a-\mathbf{p}_{k'}\Vert \geq \alpha $ and the 
attack is $\epsilon$-stealthy 
for all $k\in \mathbb{Z}_{\geq 0}$.
 If such sequence exists, the system is called $(\epsilon,\alpha)$-attackable.
\end{definition}

% Our goal is the derive methods to capture impact of stealthy attacks; specifically, in the next section we derive conditions for 
% existence of a \emph{stealthy} yet \emph{effective} attack sequence 
% $Y_{0}^a, y_{1}^{a},...$  resulting in
% $\Vert x_t \Vert \geq \alpha$ for some $t\geq 0$ -- i.e., we find conditions for a system to be $(\epsilon,\alpha)$-attackable. 
% Here, for an attack to be stealthy, we focus on 
% the $\epsilon-$stealthy notion; 
% i.e., that the best %existing detector 
% AD could only improve the probability detection by $\epsilon$ compared to random-guess baseline detector. 

% \subsection*{Problem Formulation}\label{sec:PF}
% Now, the problem considered in this work can be formalized as % the derive methods to 
% capturing the potential impact of stealthy attacks on a considered system; specifically, in the next section,
% we derive conditions for 
% existence of a \emph{stealthy} yet \emph{effective} attack sequence
% $a_{0}, a_{1},...$  resulting in
% $\Vert x_t^a\Vert \geq \alpha$ for some $t\in \mathbb{Z}_{\geq 0}$ -- i.e., we find conditions for the system to be $(\epsilon,\alpha)$-attackable. 
% Here, for an attack to be stealthy, we focus on 
% the $\epsilon-$stealthy notion; 
% i.e., that even the best 
% anomaly detector could only improve the detection probability by $\epsilon$ compared to the random-guess baseline detector. 

\section{Vulnerability Analysis of UAV to GPS Attack} \label{sec:perfect}

% In this section, we 
We now derive conditions under which %the quadcopter drone 
a UAV system described by~\eqref{eq:UAV_state_sp} and closed-loop dynamics~\eqref{eq:closed-loop} is vulnerable to impactful stealthy attacks, as formally defined in Section~\ref{sec:stealthy}.
%The following theorem specifies such condition.
%
% \vspace{-4pt}
\begin{theorem}
 The UAV system~\eqref{eq:UAV_state_sp} is $(\epsilon,\alpha)$-successful attackable for any $\alpha $ if its closed-loop dynamical model~\eqref{eq:closed-loop} is exponentially incrementally stable.     
\end{theorem}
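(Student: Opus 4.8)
The plan is to exhibit a single attack sequence that is simultaneously $\epsilon$-stealthy and drives the true position off by at least $\alpha$, and to verify the two requirements separately. I would set up a direct comparison between the attack-free closed-loop trajectory $\mathbf{X}_k$ (with true position $\mathbf{p}_k$) and the compromised trajectory $\mathbf{X}_k^a$ (true position $\mathbf{p}_k^a$), both driven by the same disturbance/noise realization $\mathbf{d}$. Since the attack enters only through the additive GPS term $a_k$, the compromised run is the nominal closed loop fed a perturbed position measurement; the first goal is to choose $\{a_k\}$ so that, from the controller's (and hence the detector's) viewpoint, the run looks like a legitimate trajectory tracking the reference, while the actual drone is steered onto a displaced ``virtual'' trajectory $\mathbf{X}^s$ that reaches $\Vert \mathbf{p}^s_{k'} - \mathbf{p}_{k'}\Vert \geq \alpha$.

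For the deviation, I would exploit that the GPS is the only sensor anchoring absolute position (the IMU measures attitude only, through $h(\mathbf{\Omega})$), so injecting $a_k$ lets the attacker decouple the controller's perceived position from the true one: the attacker picks $a_k$ so the fused position estimate stays on the reference while the realized position drifts to the target. The role of exponential incremental stability is to make this construction well posed: it guarantees the compromised closed loop has a bounded, contracting response to the injected signal, so that the realized true trajectory stays within a controllable neighborhood of the intended $\mathbf{X}^s$ and does not diverge uncontrollably. The attainable position offset is then limited only by the magnitude of the injected bias, which may be grown arbitrarily, so any prescribed $\alpha$ is reached at some $k'$.

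For stealthiness I would bound $KL(\mathbf{Q}\Vert\mathbf{P})$ directly. Using the chain rule (\cref{lemma:chain}) to factor the joint measurement distribution across time, and monotonicity (\cref{lemma:mon}) to drop nuisance terms, the divergence reduces to a sum of per-step, per-sensor contributions. Because each measurement is Gaussian with fixed covariance, \cref{lemma:Guassian} turns every contribution into a quadratic form in the difference of conditional means. The position channel contributes essentially nothing, since $a_k$ is chosen so the controller receives a nominal-looking position; the only residual comes from the IMU channel, whose mean shift is $h(\mathbf{\Omega}_k^a)-h(\mathbf{\Omega}_k)$ and is bounded by $L_h\Vert \mathbf{\Omega}_k^a-\mathbf{\Omega}_k\Vert$ via Lipschitzness of $h$. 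Thus $KL(\mathbf{Q}\Vert\mathbf{P})$ is controlled by the accumulated attitude excursion $\sum_k \Vert \mathbf{\Omega}_k^a-\mathbf{\Omega}_k\Vert^2$, which incremental stability keeps summable, and which is driven below $\log(1/(1-\epsilon^2))$ by executing the displacement quasi-statically.

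I expect the crux to be reconciling the two goals, which pull in opposite directions: realizing a true position offset forces the drone to tilt, and that tilt leaks into the unspoofable IMU residual and inflates the KL. The main obstacle is therefore to show that the attitude excursion needed to reach displacement $\alpha$ can be made arbitrarily small while still reaching $\alpha$, and that the spoofed measurements remain self-consistent with the realized trajectory. This is exactly where exponential incremental stability is indispensable: it lets the attacker stretch the maneuver over arbitrarily many steps with vanishing per-step tilt (so the per-step IMU leak scales like $O(\text{speed}^2)$ while the number of steps scales like $O(1/\text{speed})$, making the total leak vanish), while guaranteeing the realized trajectory tracks the intended displaced one, so that for any $\alpha$ the attack is $(\epsilon,\alpha)$-successful.
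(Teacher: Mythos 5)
Your high-level architecture matches the paper's (compare a virtual ``spoofed'' trajectory to the nominal one, bound the KL divergence via \cref{lemma:mon}, \cref{lemma:chain} and \cref{lemma:Guassian}, use the Lipschitz constant of $h$ for the IMU channel, and let the true position run off unboundedly), but the crux of the argument is missing, and the substitute mechanism you propose does not close. The paper's proof hinges on a concrete, model-free attack, the ramp $a_k=-Ct_{k+1}$ in \eqref{eq:attack_gps}, together with a change of variables $\mathbf{p}_k^f=\mathbf{p}_k^a-Ct_{k+1}$, $\mathbf{v}_k^f=\mathbf{v}_k^a-C$, $\mathbf{\Omega}_k^f=\mathbf{\Omega}_k^a$ under which the observation-generating process of the attacked system becomes a solution of the \emph{same} closed-loop map $F(\cdot,\mathbf{d}_k,k)$ as the nominal system, merely with a shifted initial condition ($\mathbf{v}_0^f=\mathbf{v}_0-C$). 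Only then does incremental exponential stability apply -- the definition compares two trajectories of the same system under the \emph{same} input $d$ -- yielding $\Vert\mathbf{X}_k-\mathbf{X}_k^f\Vert\le\kappa\Vert\mathbf{X}_0-\mathbf{X}_0^f\Vert\lambda^{-k}$, a geometrically decaying residual in \emph{both} the GPS and IMU channels, hence a KL bound that is $O(\Vert C\Vert^2)$ uniformly over an infinite attack horizon, while $\Vert\mathbf{p}_k^a-\mathbf{p}_k\Vert\ge\Vert C\Vert t_{k+1}-O(\Vert C\Vert)$ grows linearly forever. You instead invoke IES to claim ``the compromised closed loop has a bounded, contracting response to the injected signal''; that is not what IES gives you, since the attacked system and the nominal system have \emph{different} inputs, and without the coordinate shift there is no pair of same-input trajectories to contract toward each other.

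Two further steps of your plan are not realizable as stated. First, ``the attacker picks $a_k$ so the fused position estimate stays on the reference'' (and so the position channel ``contributes essentially nothing'') requires inverting the sensor-fusion map and knowing both trajectories, i.e., white-box access; the paper's construction needs no such knowledge and instead accepts a nonzero but geometrically summable position residual $\Vert\mathbf{p}_i-\mathbf{p}_i^f\Vert$. Second, your quasi-static trade-off (per-step IMU leak $O(\text{speed}^2)$ over $O(1/\text{speed})$ steps) is asserted rather than derived, and it frames the problem as a tension between displacement and tilt that the ramp attack simply does not have: under \eqref{eq:attack_gps} the drone asymptotically acquires a constant velocity offset $C$, whose steady-state attitude coincides with the nominal one, so $\Vert\mathbf{\Omega}_i-\mathbf{\Omega}_i^f\Vert$ decays geometrically and there is no accumulating IMU leak to manage. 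To repair your proof you would need to (i) exhibit an explicit $a_k$, (ii) show the resulting spoofed dynamics can be rewritten as \eqref{eq:closed_fake}, and (iii) only then apply IES to obtain \eqref{eq:increm_closed} and the summable KL bound.
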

\begin{proof}
% \vspace{-1pt}
Let us assume the trajectory of the system and controller states for $k\in \mathbb{Z}_{<0}$ is denoted by $\mathbf{X}_{-T:-1}$ according to~\eqref{eq:closed-loop}. Starting the attack at time zero, the trajectory of the system under attack, denoted by $\mathbf{X}^a$ evolves as
%
% \vspace{-1pt}
\begin{equation}\label{eq:system_attacks}
\begin{split}
{\mathbf{x}}_{k+1}^a&={\mathbf{x}}_{k}^a + \mathbf{f}(\mathbf{x}_{k}^a,\mathbf{u}_{k}^a)T_d+\mathbf{w}_k, \\
\hat{\mathbf{x}}_{k}^a &= \mathbf{g}(\hat{\mathbf{x}}_{k-1}^a,\mathbf{u}_{k-1}^a,\mathbf{y}_k^{c,a},k), \,\,
\mathpzc{X}_{k}^a=f_c(\mathpzc{X}_{k-1}^a,\hat{\mathbf{x}}_{k}^a,r_k^a),
\end{split}
\end{equation}
%
% \vspace{3pt}
where the observation and control input are%\todo{check the math below}
\begin{equation}
\begin{split} 
\mathbf{y}_k^{\mathbf{p},c,a} &= \mathbf{p}_{k}^a+\mathbf{n}_k^{\mathbf{p}}+a_k,\qquad \mathbf{y}_k^{\mathbf{\Omega},c,a}= h(\mathbf{\Omega}_k^a)+\mathbf{n}_k^{\mathbf{\Omega}},\\
\mathbf{u}_k^a  &= K(\mathpzc{X}_{k}^a,\hat{\mathbf{x}}_{k}^a).
\end{split}
\end{equation}
% \vspace{-1pt}

On the other hand, if the system is not under attack during $k\in \mathbb{Z}_{\geq 0}$, the evolution of the system, sensor fusion, and controller states can be represented as $\mathbf{X}_{0:k}$. This extends the system trajectories $\mathbf{X}_{-T:-1}$ assuming no false data injection occurs. The dynamics of the drone, sensor fusion, and controller states are then described by
\begin{equation}
\begin{split}
{\mathbf{x}}_{k+1}&={\mathbf{x}}_{k} + \mathbf{f}(\mathbf{x}_{k},\mathbf{u}_{k})T_d+\mathbf{w}_k, \\
\hat{\mathbf{x}}_{k} &= \mathbf{g}(\hat{\mathbf{x}}_{k-1},\mathbf{u}_{k-1},\mathbf{y}_k^{c},k),\\
\mathpzc{X}_{k}&=f_c(\mathpzc{X}_{k-1},\hat{\mathbf{x}}_{k},r_k),\qquad \mathbf{u}_k  = K(\mathpzc{X}_{k},\hat{\mathbf{x}}_{k}),\\
\mathbf{y}_k^{\mathbf{p},c} &= \mathbf{p}_{k}+\mathbf{n}_k^{\mathbf{p}}, \qquad \,\,\,
\mathbf{y}_k^{\mathbf{\Omega},c}= h(\mathbf{\Omega}_k)+\mathbf{n}_k^{\mathbf{\Omega}}.
\end{split}
\end{equation}

In a compact form, the system dynamics are given by $\mathbf{X}_{k+1}=F(\mathbf{X}_k,\mathbf{d}_k,k)$. %Assume 
Consider the GPS attack sequence %is generated by
% \vspace{-5pt}
\begin{equation}\label{eq:attack_gps}
a_k= -Ct_k=- \begin{bmatrix}
c_x&c_y&c_z   
\end{bmatrix}^Tt_{k+1},  
% \vspace{-2pt}
\end{equation}
where $t_{k+1}=(k+1)T_d$ for all $k\in \mathbb{Z}_{\geq 0}^g$; the constants $c_x$, $c_y$ and $c_z$ in the vector $C\in \mathbb{R}^3$ are chosen by the attacker.  

Let us define $\mathbf{p}_{k}^f\delequal\mathbf{p}_k^a-Ct_{k+1}$, % for any 
$k\in \mathbb{Z}_{\geq 0}$. Thus, the GPS measurements delivered to the controller after the attack are
% \vspace{-4pt}
\begin{equation}
\mathbf{y}_k^{\mathbf{p},c,a} = \mathbf{y}_k^{\mathbf{p},a}+a_k = \mathbf{p}_{k}^a+\mathbf{n}_k^{\mathbf{p}}-Ct_{k+1}=\mathbf{p}_{k}^f+\mathbf{n}_k^{\mathbf{p}}
% \vspace{-2pt}
\end{equation}
for all $k\in \mathbb{Z}^{g}_{\geq 0}$. 

By defining ${\mathbf{v}}_k^f\delequal{\mathbf{v}}_k^a-C$, the evolution of the $\mathbf{p}_{k}^f$ is
\begin{equation}\label{eq:pos_f}
\begin{split}
\mathbf{p}_{k+1}^f&=\mathbf{p}_{k+1}^a-Ct_{k+2} = \mathbf{p}_{k}^a+({\mathbf{v}}_k^a)T_d+\mathbf{w}_k^{\mathbf{p}}-C(k+2)T_d\\
&= \mathbf{p}_{k}^f + ({\mathbf{v}}_k^a-C)T_d +\mathbf{w}_k^{\mathbf{p}}= \mathbf{p}_{k}^f + {\mathbf{v}}_k^fT_d +\mathbf{w}_k^{\mathbf{p}},
\end{split}
\end{equation}
where $\mathbf{w}_k^{\mathbf{p}}$ is the portion of vector $\mathbf{w}_k$ associated with the position dynamics. The evolution of ${\mathbf{v}}_k^f$ can then be captured
\begin{equation}\label{eq:vel_f}
\begin{split}
\mathbf{v}_{k+1}^f&={\mathbf{v}}_{k+1}^a-C = {\mathbf{v}}_{k}^a - C +T_d\mathbf{f}_{\mathbf{v}}(\Omega^a,F_z^a)+\mathbf{w}_k^{\mathbf{v}}\\
&= {\mathbf{v}}_{k}^f +T_d\mathbf{f}_{\mathbf{v}}(\Omega^a,F_z^a)+\mathbf{w}_k^{\mathbf{v}},
\end{split}
\end{equation}
where $\mathbf{f}_{\mathbf{v}}$ represents a part of the function $\mathbf{f}$ associated with the derivative of velocity (the right-hand side of the second three equations in~\eqref{eq:UAV_state_sp}), and $\mathbf{w}_k^{\mathbf{v}}$ is the associated process noise. Also, by defining $\mathbf{\Omega}_k^f = \mathbf{\Omega}_k^a$ we obtain
%
% \vspace{-4pt}
\begin{equation}\label{eq:omega_f}
{\mathbf{\Omega}}_{k+1}^f = {\mathbf{\Omega}}_{k}^f+T_d \mathbf{f}_{{\mathbf{\Omega}}}(\Omega_k^f,\bm{\tau}^a_k),
% \vspace{-3pt}
\end{equation}
and $\mathbf{f}_{\mathbf{\Omega}}$ is the component of $\mathbf{f}$ related to the derivative of angular velocity. Combining~\eqref{eq:pos_f}-\eqref{eq:omega_f} with the dynamics of $\hat{\mathbf{x}}_{k}$ and $\mathpzc{X}_{k}^a$ from~\eqref{eq:system_attacks}, we obtain
\begin{equation}\label{eq:fake_state_d}
\begin{split}
{\mathbf{x}}_{k+1}^f&={\mathbf{x}}_{k}^f + \mathbf{f}(\mathbf{x}_{k}^f,\mathbf{u}_{k}^a)T_d+\mathbf{w}_k, \\
\hat{\mathbf{x}}_{k}^a &= \mathbf{g}(\hat{\mathbf{x}}_{k-1}^a,\mathbf{u}_{k-1}^a,\mathbf{y}_k^{c,a},k),\\
\mathpzc{X}_{k}^a&=f_c(\mathpzc{X}_{k-1}^a,\hat{\mathbf{x}}_{k}^a,r_k), \quad \,\, \mathbf{u}_k^a  = K(\mathpzc{X}_{k}^a,\hat{\mathbf{x}}_{k}^a),\\
\mathbf{y}_k^{\mathbf{p},c,a} &= \mathbf{p}_{k}^f+\mathbf{n}_k^{\mathbf{p}},\qquad \qquad
\mathbf{y}_k^{\mathbf{\Omega},c,a}= h(\mathbf{\Omega}_k^f)+\mathbf{n}_k^{\mathbf{\Omega}}
\end{split}
\end{equation}
The above system of equations can be written as
%
% \vspace{-4pt}
\begin{equation}\label{eq:closed_fake}
\mathbf{X}_{k+1}^f=F(\mathbf{X}_k^f,\mathbf{d}_k,k).
% \vspace{-4pt}
\end{equation}
with $\mathbf{X}_k^f\delequal \begin{bmatrix}
(\mathbf{x}^f)^T& ({\hat{\mathbf{x}}}^a)^T &(\mathpzc{X}^a)^T \end{bmatrix}^T$. The dynamical models~\eqref{eq:closed_fake} and~\eqref{eq:closed-loop} have the same input $\mathbf{d}$, but different initial conditions. Thus, applying the incremental stability  property of the closed-loop system yields
\vspace{-4pt}
\begin{equation}\label{eq:increm_closed}
\Vert  \mathbf{X}_{k}-\mathbf{X}_{k}^f \Vert   \leq \kappa \Vert\mathbf{X}_{0}-\mathbf{X}_{0}^f \Vert \lambda^{-k},
% \vspace{-4pt}
\end{equation}
for some $\kappa, \lambda>1$. 
From the definition $\mathbf{p}_{k}^f=\mathbf{p}_k^a-Ct_k$ and ${\mathbf{v}}_k^f={\mathbf{v}}_k^a-C$, it follows that $\mathbf{p}_{0}^f=\mathbf{p}_0^a$ and ${\mathbf{v}}_0^f={\mathbf{v}}_0^a-C$. Since it takes one time step until the attack signal impacts the system, we get ${\mathbf{x}}_{0}^a={\mathbf{x}}_{0}$. Therefore, $\mathbf{p}_{0}^f=\mathbf{p}_0$ and ${\mathbf{v}}_0^f={\mathbf{v}}_0-C$. Similarly, $\mathbf{\Omega}_0^f=\mathbf{\Omega}_0$, resulting in $\mathbf{x}_{0}-\mathbf{x}_{0}^f=\begin{bmatrix}
\mathbf{0}_9^T & c_x & c_y & c_z  
\end{bmatrix}^T$. Hence, we obtain
%
% \vspace{-7pt}
\begin{equation}\label{eq:exp_in}
\begin{split}
\Vert  \mathbf{X}_{0}-\mathbf{X}_{0}^f \Vert   \leq &\Vert  \mathbf{x}_{0}-\mathbf{x}_{0}^f \Vert + \Vert  {\hat{\mathbf{x}}_0}-{\hat{\mathbf{x}}_0^a} \Vert + \Vert  \mathpzc{X}_0-\mathpzc{X}_0^a \Vert\\ 
\leq & \Vert C \Vert + (L_{\mathbf{g}}+L_{f_c})T_d\Vert C \Vert,  
\end{split}
\end{equation}
where the above inequality uses the Lipschitz properties of the functions $\mathbf{g}$ and $f_c$. 

Using~\eqref{eq:exp_in}, we demonstrate that the attack from~\eqref{eq:attack_gps} is $\epsilon$-stealthy. From KL divergence monotonicity (Lemma~\ref{lemma:mon}), it holds that
%
% \vspace{-4pt}
\begin{equation}\label{eq:KL_mono}
\begin{split}
&KL\big(\mathbf{Q}(\mathbf{y}_{-\infty:-1}^{c},\mathbf{y}_{0:k}^{c,a})||\mathbf{P}(\mathbf{y}_{-\infty:k}^{c})\big)\leq \\
&KL\big(\mathbf{Q}(\mathbf{w}_{-\infty:k},\mathbf{y}_{-\infty:-1}^c,\mathbf{y}_{0:k}^{c,a})||\mathbf{P}(\mathbf{w}_{-\infty:k},\mathbf{y}_{-\infty:k}^{c})\big).
\end{split}
\end{equation}
% \vspace{-12pt}
%
Using the chain-rule property of KL-divergence in~\eqref{eq:KL_mono} gives
%
% \vspace{-4pt}
\begin{equation}\label{eq:chain_1}
\begin{split}
&KL\big(\mathbf{Q}(\mathbf{w}_{-\infty:k},\mathbf{y}_{-\infty:-1}^c,\mathbf{y}_{0:k}^{c,a})||\mathbf{P}(\mathbf{w}_{-\infty:k},\mathbf{y}_{-\infty:k}^{c})\big)\leq \\
&KL\big(\mathbf{Q}(\mathbf{w}_{-\infty:k},\mathbf{y}_{-\infty:-1}^{c})||\mathbf{P}(\mathbf{w}_{-\infty:k},\mathbf{y}_{-\infty:-1}^c)\big) + \\
&KL\big(\mathbf{Q}(\mathbf{y}_{0:k}^{c,a}|\mathbf{w}_{-\infty:k},\mathbf{y}_{-\infty:-1}^c)||\mathbf{P}(\mathbf{y}_{0:k}^{c}|\mathbf{w}_{-\infty:k},\mathbf{y}_{-\infty:-1}^{c})\big).
\end{split}
\end{equation}
%
% \vspace{-15pt}

The first term in the right hand side of the above inequality is zero as the  KL-divergence between two identical joint distributions is zero. Applying the chain-rule to the second term in the right hand side of %the above inequality~
\eqref{eq:chain_1} gives
% \vspace{-3pt}
\begin{equation}\label{eq:chain_2}
\begin{split}
&KL\big(\mathbf{Q}(\mathbf{y}_{0:k}^{c,a}|\mathbf{w}_{-\infty:k},{\mathbf{y}_{-\infty:-1}^{c}})||\mathbf{P}(\mathbf{y}_{0:k}^c|\mathbf{w}_{-\infty:k},{\mathbf{y}_{-\infty:-1}^{c}})\big) \\
&\leq \sum_{i=0}^k KL\big(\mathbf{Q}(\mathbf{y}_{i}^{c,a}|\mathbf{w}_{-\infty:k},\mathbf{y}_{-\infty:-1}^{c},\mathbf{y}_{0:i-1}^{c,a})\\
&\qquad \qquad \qquad||\mathbf{P}(\mathbf{y}_{i}^c|\mathbf{w}_{-\infty:k},\mathbf{y}_{-\infty:i-1}^{c})\big).
\end{split}
\end{equation}
% \vspace{-15pt}

It is straightforward to verify that given $\mathbf{w}_{-\infty:k},\mathbf{y}_{-\infty:i-1}^{c}$, the distribution of the $\mathbf{y}_{i}^c$ is Gaussian. For any $i\not\in \mathbb{Z}_{\geq 0}^{g}$, we have $\mathbf{y}_{i}^c=h(\mathbf{\Omega}_k)+\mathbf{n}_k^{\mathbf{\Omega}}$, which gives $\mathbf{P}(\mathbf{y}_{i}|\mathbf{w}_{-\infty:k},\mathbf{y}_{-\infty:i-1}^{c}) \thicksim \mathcal{N}(h(\mathbf{\Omega}_k), \Sigma_{\mathbf{\Omega}})$.~For~$i\in \mathbb{Z}_{\geq 0}^{g}$ we have that $\mathbf{P}(\mathbf{y}_{i}|\mathbf{w}_{-\infty:k},\mathbf{y}_{-\infty:i-1}^{c}) \thicksim \mathcal{N}(\begin{bmatrix}\mathbf{p}_k\\
h(\mathbf{\Omega}_k)\end{bmatrix}, \text{diag}(\Sigma_{\mathbf{p}},\Sigma_{\mathbf{\Omega}}))$. Similarly, given $\mathbf{w}_{-\infty:k},\mathbf{y}_{-\infty:-1}^{c},\mathbf{y}_{0:i-1}^{c,a}$   the distribution of the $\mathbf{y}_{i}^{c,a}$ is Gaussian. For any $i\not\in \mathbb{Z}_{\geq 0}^{g}$, we have that $\mathbf{y}_{i}^{c,a}=h(\mathbf{\Omega}_k^f)+\mathbf{n}_k^{\mathbf{\Omega}}$, which results in $\mathbf{P}(\mathbf{y}_{i}^{c,a}|\mathbf{w}_{-\infty:k},\mathbf{y}_{-\infty:i-1}^{c}) \thicksim \mathcal{N}(h(\mathbf{\Omega}_k^f), \Sigma_{\mathbf{\Omega}})$. For $i\in \mathbb{Z}_{\geq 0}^{g}$ we have that $\mathbf{P}(\mathbf{y}_{i}^{c,a}|\mathbf{w}_{-\infty:k},\mathbf{y}_{-\infty:i-1}^{c}) \thicksim \mathcal{N}(\begin{bmatrix}\mathbf{p}_i^f\\
h(\mathbf{\Omega}_i^f)\end{bmatrix}, \text{diag}(\Sigma_{\mathbf{p}},\Sigma_{\mathbf{\Omega}}))$. 

Using Lemma~\ref{lemma:Guassian} for $i\in \mathbb{Z}_{\geq 0}-\mathbb{Z}_{\geq 0}^{g}$, it holds that
%
% \vspace{-4pt}
\begin{equation}\label{eq:KL_imu}
\begin{split}
&KL\big(\mathbf{Q}(\mathbf{y}_{i}^{c,a}|\mathbf{w}_{-\infty:k},\mathbf{y}_{-\infty:-1}^{c},\mathbf{y}_{0:i-1}^{c,a})||\mathbf{P}(\mathbf{y}_{i}^c|\mathbf{w}_{-\infty:k},\mathbf{y}_{-\infty:i-1}^{c})\big)\\
&= \big(h(\mathbf{\Omega}_i)-h(\mathbf{\Omega}_i^f)\big)^T\Sigma_{\mathbf{\Omega}}^{-1}\big(h(\mathbf{\Omega}_i)-h(\mathbf{\Omega}_i^f)\big)\\
&\leq L_h^2\Vert \mathbf{\Omega}_i - \mathbf{\Omega}_i^f\Vert^2\lambda_{max}(\Sigma_{\mathbf{\Omega}}^{-1}).
\end{split}
\end{equation}
And for $i\in \mathbb{Z}_{\geq 0}^{g}$, we obtain
%
% \vspace{-3pt}
\begin{equation}\label{eq:KL_gps_imu}
\begin{split}
&KL\big(\mathbf{Q}(\mathbf{y}_{i}^{c,a}|\mathbf{w}_{-\infty:k},\mathbf{y}_{-\infty:-1}^{c},\mathbf{y}_{0:i-1}^{c,a})||\mathbf{P}(\mathbf{y}_{i}^c|\mathbf{w}_{-\infty:k},\mathbf{y}_{-\infty:i-1}^{c})\big)\\
&= \big(\begin{bmatrix}\mathbf{p}_i-\mathbf{p}_i^f\\
h(\mathbf{\Omega}_i)-h(\mathbf{\Omega}_i^f)\end{bmatrix}\big)^T\begin{bmatrix}
\Sigma_{\mathbf{p}}&0\\0&\Sigma_{\mathbf{\Omega}}
\end{bmatrix}^{-1}\big(\begin{bmatrix}\mathbf{p}_i-\mathbf{p}_i^f\\
h(\mathbf{\Omega}_i)-h(\mathbf{\Omega}_i^f)\end{bmatrix}\big)\\
& \leq  L_h^2\Vert \mathbf{\Omega}_i - \mathbf{\Omega}_i^f\Vert^2\lambda_{max}(\Sigma_{\mathbf{\Omega}}^{-1}) + \Vert \mathbf{p}_i - \mathbf{p}_i^f\Vert^2\lambda_{max}(\Sigma_{\mathbf{p}}^{-1}) .
\end{split}
\end{equation} 
Using the $l_2$ norm properties, we have $\Vert \mathbf{p} - \mathbf{p}^f\Vert, \Vert \mathbf{\Omega} - \mathbf{\Omega}^f\Vert \leq \Vert  \mathbf{X}-\mathbf{X}^f \Vert$ as $\mathbf{p}$ and $\mathbf{\Omega}$ are sub-vectors of $\mathbf{X}$. Combining this with inequalities~\eqref{eq:KL_mono}-\eqref{eq:KL_gps_imu} and leveraging the incremental stability of the closed-loop system~\eqref{eq:increm_closed}, we obtain that
%
% \vspace{-4pt}
\begin{equation}
\begin{split}\label{eq:b_epsilon_1}
&KL\big(\mathbf{Q}(\mathbf{y}_{-\infty:-1}^{c},\mathbf{y}_{0:k}^{c,a})||\mathbf{P}(\mathbf{y}_{-\infty:k}^{c})\big)\leq \kappa^2 \Vert \mathbf{X}_0-\mathbf{X}_0^f \Vert^2\times \\
& \Big(\sum_{i=0}^{k}L_h^2 \lambda^{-2i}\lambda_{max}(\Sigma_{\mathbf{\Omega}}^{-1})+\sum_{\substack{i=0\\ i\in \mathpzc{Z}^g}}^k\lambda^{-2i}\lambda_{max} (\Sigma_{\mathbf{p}}^{-1})\Big) \leq \\
& \kappa^2 \Vert C \Vert^2( 1+ (L_{\mathbf{g}}+L_{f_c})T_d)^2\big(
\frac{L_h^2\lambda_{max}(\Sigma_{\mathbf{\Omega}}^{-1})}{1-\lambda^{-2}}+\frac{\lambda_{max}(\Sigma_{\mathbf{p}}^{-1})}{1-\lambda^{-2N_{g}}}\big).
\end{split}
\end{equation}
Let $b_{\epsilon}$ denote the right-hand side of the inequalities. Then, we have $\epsilon=\sqrt{1-e^{b_{\epsilon}}}$. With the attack shown to be
$\epsilon$-stealthy, we now need to demonstrate its effectiveness. 

By combining $\mathbf{p}_{k}^f=\mathbf{p}_k^a-Ct_{k+1}$ with $\Vert \mathbf{p}_{k}-\mathbf{p}_{k}^f\Vert \leq \kappa \Vert\mathbf{X}_{0}-\mathbf{X}_{0}^f \Vert \lambda^{-k}$ from~\eqref{eq:increm_closed}, it holds that 
%
% \vspace{-5pt}
\begin{equation}
\begin{split}
&\Vert \mathbf{p}_{k}-\mathbf{p}_k^a+Ct_{k+1}\Vert \leq \kappa \Vert\mathbf{X}_{0}-\mathbf{X}_{0}^f \Vert \Rightarrow \\
&\Vert Ct_{k+1} \Vert - \Vert \mathbf{p}_{k}^a-\mathbf{p}_k\Vert \leq  \kappa \Vert\mathbf{X}_{0}-\mathbf{X}_{0}^f \Vert \Rightarrow \\ 
&\Vert \mathbf{p}_{k}^a-\mathbf{p}_k\Vert \geq \Vert Ct_{k+1} \Vert - \kappa \Vert C \Vert( 1+ (L_{\mathbf{g}}+L_{f_c})T_d) 
\end{split}
\end{equation}
Since $t_{k+1}= (k+1)T_d$ can become arbitrarily large as $k$ increase, we can choose $k\geq -1 + \frac{\alpha+\kappa ( 1+ \Vert C\Vert(L_{\mathbf{g}}+L_{f_c})T_d)}{\Vert C\Vert T_d}$ to ensure $\alpha$ successful attack, which concludes the proof.
\end{proof}
% \vspace{-5pt}

From~\eqref{eq:attack_gps}, the attacker does not need the system model to generate attack sequences, %. The attacker can launch 
launching a stealthy and impactful attack by compromising only the GPS sensor, and not the IMU. % sensor unattacked. 
With all parameters on the right-hand side of the inequality~\eqref{eq:b_epsilon_1} being constants, stealthiness can be controlled by choosing smaller values for 
$\Vert C\Vert $, though this will prolong the time required for the attack to be $\alpha$-effective. Thus, there is a trade-off between stealthiness and effectiveness time.
The GPS spoofed signals, 
$\mathbf{y}_{k}^{c,a}$, follow the dynamics~\eqref{eq:fake_state_d}. As $\mathbf{X}_{k}^f$  closely matches the UAV trajectory without the attack, the spoofed GPS signals appear legitimate to~the~controller. %, which operates under the assumption that the states are as expected.
% \vspace{-3pt}

\section{Simulation Results}
\label{sec:simulation}

\begin{figure*}[!t]
\begin{subfigure}{0.32\textwidth}
\includegraphics[width=.9\linewidth, height=3.8cm]{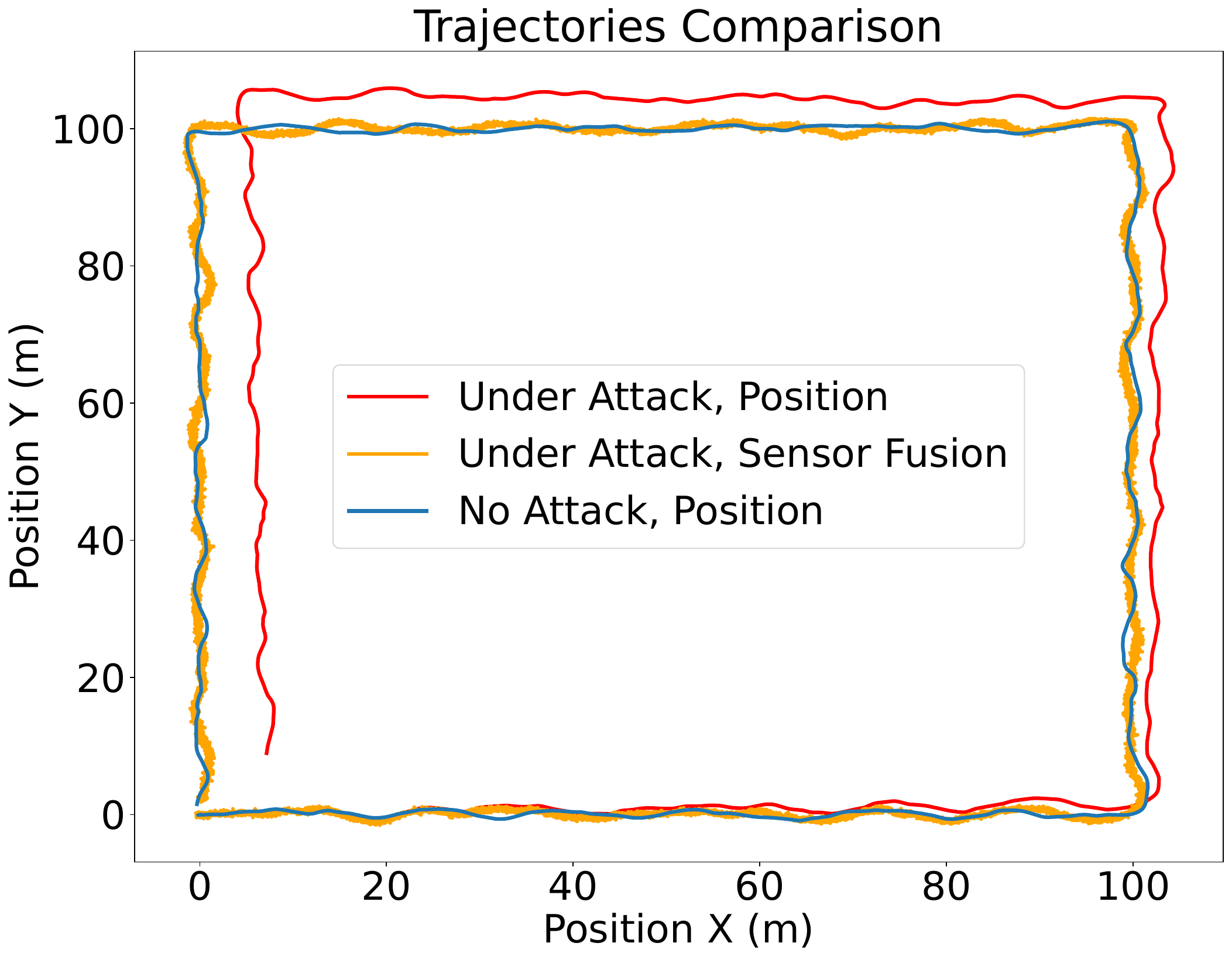}
\hspace{-6pt}
%\caption{}
\label{fig:traj}
\end{subfigure}%\hspace{10pt}
\begin{subfigure}{0.32\textwidth}
\includegraphics[width=.9\linewidth, height=3.8cm]{./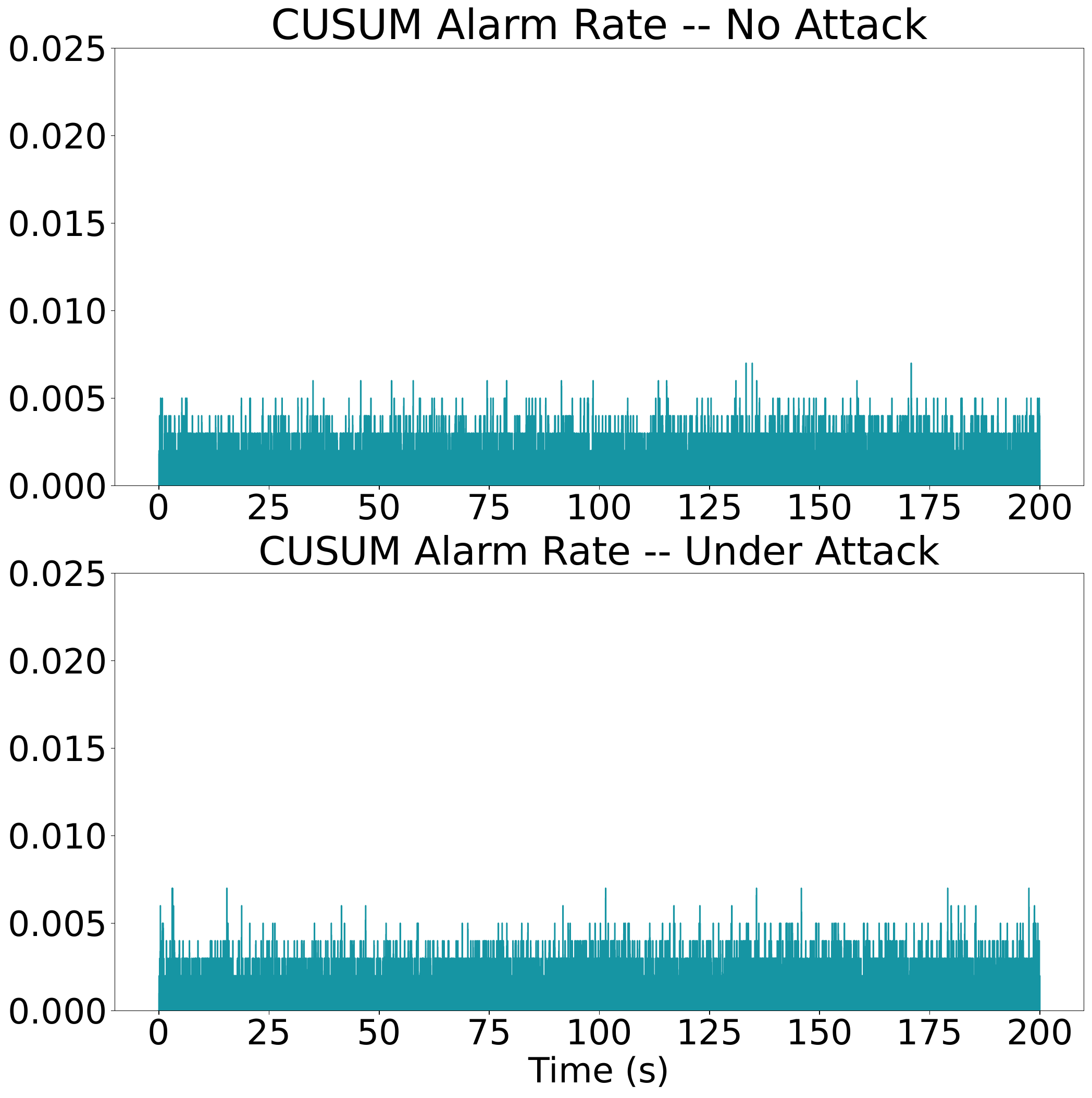}
\hspace{-6pt}
%\caption{}
\label{fig:drone_CUSUM}
\end{subfigure}%\hspace{10pt}
\begin{subfigure}{0.32\textwidth}
\includegraphics[width=.9\linewidth, height=3.8cm]{./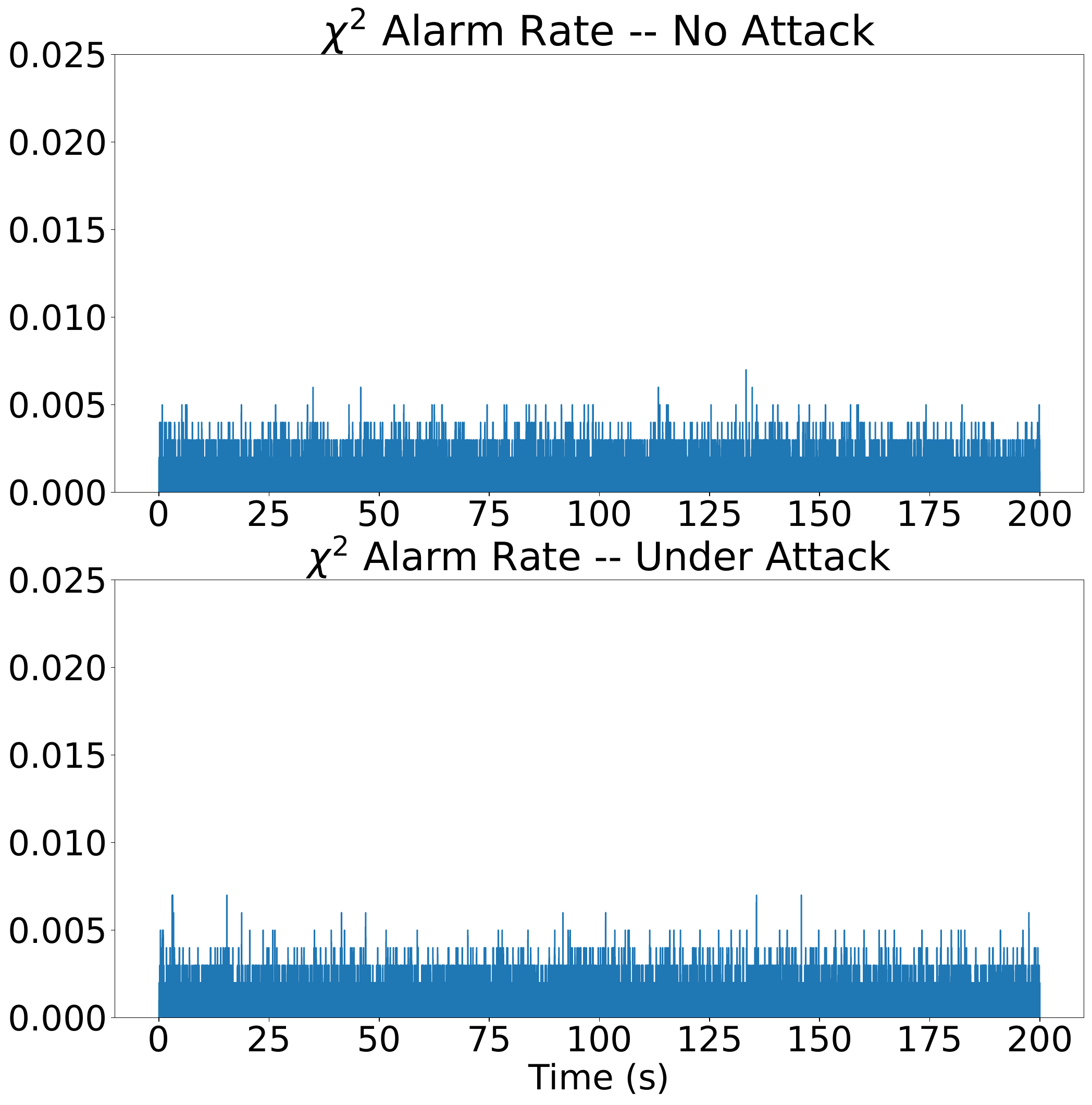}
\hspace{-6pt}
%\caption{}
\label{fig:drone_Chi}
\end{subfigure}
% \vspace{-6pt}
\caption{(a)~The $X-Y$ position trajectories of the drone without the attack (blue), with the attack (red) and the output of sensor fusion for $X-Y$ position in the presence of attack. (b,c)~The alarm rate for CUSUM and $\chi^2$ ADs  for two cases when the system is under attack (attack starts at time t = 0) and when it is attack free over 5000 experiments.}
\label{fig:drone_stealthiness_fig}
\vspace{-4pt}
\end{figure*}

%In this section, we evaluate our results. % using Matlab Simulink. 
We modeled quadcopter dynamics with disturbances such as drag forces on the torque and thrust, generating IMU and GPS signals with Matlab's built-in functions. For sensor fusion, we use Matlab's Navigation toolbox and a cascade PID controller for attitude and position control. The position controller's output serves as the set-point for the attitude controller. We also implement two ADs: $\chi^2$ and CUSUM~\cite{umsonst2017security}, with a false alarm threshold of $p^{FA}=0.001$.

We simulate a waypoint tracking task, where the drone navigates through four points forming a square in the $X-Y$ plane: starting at $(0, 0, 10\text{m})$, moving to $(100\text{m}, 0, 20\text{m})$, $(100\text{m}, 100\text{m}, 20\text{m})$, $(0, 100\text{m}, 10\text{m})$, and returning to $(0, 0, 10\text{m})$. Using the attacks from~\eqref{eq:attack_gps} with $\Vert C\Vert = 0.05$, Fig.~\ref{fig:drone_stealthiness_fig}a %fig:traj} 
shows the drone's trajectory without attack (blue) and with attack (red). The attack causes the drone to deviate by $10m$ from its intended endpoint. The orange line represents the sensor fusion output, which aligns with the trajectory in the absence of attack, indicating the system would perceive normal operation. Fig.~\ref{fig:drone_stealthiness_fig}b %~\ref{fig:drone_CUSUM} 
and  Fig.~\ref{fig:drone_stealthiness_fig}c %~\ref{fig:drone_Chi} 
show the alarm rate of  CUSUM and $\chi^2$ ADs for both cases when the system is free of attack and when the GPS is compromised (the attack starts at time zero) -- %. The figures show that 
the average true detection rate closely matches the average false detection rate for both ADs, demonstrating the attack's stealthiness.

% \vspace{-3pt}

% \begin{figure}[!t]
% \centering
% \vspace{6pt}
% \includegraphics[width=0.508\textwidth]{./Pictures/trajectories_comparison}
% \caption{The $X-Y$ position trajectories of the drone without the attack (blue), with the attack (red) and the output of sensor fusion for $X-Y$ position in the presence of attack}
% \label{fig:architecture}
% \end{figure}

% \begin{figure}[!t]
% \centering
% \vspace{6pt}
% \includegraphics[width=0.508\textwidth]{./Pictures/detection_results}
% \caption{The alarm rate for $\chi^2$ and CUSUM ADs for two cases when the system is under attack (attack starts at time t = 0) and when it is attack free.}
% \label{fig:architecture}
% \end{figure}

% For a drone with structure discussed above, we show $\Vert  \mathbf{X}_{k}-\mathbf{X}_{k}^f \Vert $ for different norm values of vector $C$ in Fig. 2 for $k\in \mathbb{Z}_{\geq 0}$ (assuming the attack starts at time zero). This shows that $\mathbf{X}_{k}^f$ converges to the $\mathbf{X}_{k}$ exponentially; in other words, we showed the inequality~\eqref{eq:closed_fake} is satisfied for different norm values of vector $C$.     

\section{Conclusion}
\label{sec:concl}
% In this work, we have analyzed  the vulnerability of UAV quadcopters to black-box stealthy false data injections on GPS sensors. We defined the notion of stealthiness in a general form, where the attack is considered to be stealthy if it is stealthy from any anomaly detector. Then, we showed that when the system is closed-loop incrementally stable, there exists a sequence of stealthy attacks by compromising only the GPS sensors that are stealthy while causing arbitrarily large deviation into the trajectories of the drone's position. Moreover, to obtain such sequence of attacks, we showed that the attacker does not need to have access to the model of the system or any information from the states. Finally, we evaluated our results in a simulation environment.  
% \vspace{-2pt}
In this work, we analyzed the vulnerability of UAV %quadcopters 
to black-box stealthy false data injections on GPS sensors. We defined stealthiness as an attack that remains undetected by any anomaly detector. We demonstrated that if the system is closed-loop incrementally stable, by only compromising the GPS sensors, %to launch 
stealthy attacks can cause arbitrarily large deviations in the drone's position trajectories. Moreover, we established that the attacker does not need access to the system model or any state information to execute~the~attacks. %Finally, we have evaluated our results in a simulation environment.  

\bibliographystyle{IEEEtranMod}
% \bibliographystyle{abbrv}
% \bibliography{PapersCPSL,BiblioCDC19_AK}

% \end{thebibliography}

\end{document}